\newcommand{\R}{\mathbb{R}}
\newcommand{\N}{\mathbb{N}}
\newcommand*{\defeq}{\mathrel{\vcenter{\baselineskip0.5ex \lineskiplimit0pt\hbox{\scriptsize.}\hbox{\scriptsize.}}} =}
\newcommand{\Poisson}{\textit{Poisson}}
\newcommand{\Normal}{\mathcal{N}}
\newcommand{\Bernoulli}{\textit{Bernoulli}}
\newcommand{\dist}[2]{\textit{dist}(#1,#2)}
\newcommand{\PHLIP}[2]{\left\vert \left\langle #1,#2\right\rangle \right\vert^2} 
\newtheorem{theorem}{Theorem}
\newtheorem{lemma}[theorem]{Lemma} 
\newtheorem{remark}[theorem]{Remark}
\title{Spectral method for low-dose Poisson and Bernoulli phase retrieval
}
\author[1]{Sjoerd Dirksen}
\author[2]{Felix Krahmer}
\author[2]{Patricia Römer}
\author[1]{Palina Salanevich}
\affil[1]{Mathematical Institute, Utrecht University}
\affil[2]{Department of Mathematics, Technical University of Munich}
\begin{document}
\maketitle

\begin{abstract}
We consider the problem of phaseless reconstruction from 
measurements with Poisson or Bernoulli distributed noise. This is of particular interest in biological imaging experiments where a low dose of radiation has to be used to mitigate potential damage of the specimen, resulting in low observed particle counts. We derive recovery guarantees for the spectral method for these noise models in the case of Gaussian measurements. Our results give a quantitative insight in the trade-off between the employed radiation dose per measurement and the overall sampling complexity.
\end{abstract}

\keywords{phase retrieval, spectral method, low-dose imaging, Poisson noise, quantized measurements}

\section{Introduction}

In biological imaging, one aims to characterize small-scale objects such as proteins or viruses. This is done, for example, via X-ray or electron microscopy, where photons or electrons, respectively, are shot onto the object of interest to measure information about the absorption and the scattering properties of it. The radiation wave comprising this information propagates then to a pixelated detector placed behind the object which quantizes the incoming wave. Essentially, the detector counts how many particles arrive in each of the pixels separately.
Biological specimens are typically highly sensitive to the radiation dose \cite{glaeser1971limitations}. If this is the case, the damage of the specimen can be mitigated by reducing the radiation dose, resulting in low counts of particles per detector pixel.
Moreover, one can only observe the intensity of the incoming energy, the phase information of the radiation wave cannot be measured by the detector due to the high-frequency oscillation of the radiation required for high-resolution imaging \cite{shechtman2015phase}. This gives rise to solving a phase retrieval problem.

Motivated by such experiments, we study the problem of reconstructing the 
object of interest $x \in \mathbb{R}^n$ from its phaseless
measurements
\begin{align*}
    y_i \approx \vert \langle a_i,x\rangle\vert^2, \quad i \in [m],
\end{align*}
given a set of measurement vectors $\left\{a_i\right\}_{i \in [m]} \subset \mathbb{R}^n$.
In this paper, we assume that $\left\{a_i\right\}_{i \in [m]}$ are i.i.d. standard Gaussian vectors. 

While the experimental interpretation would be that the radiation dose is a parameter of the measurement setup, we fix the normalization of the measurement vectors and incorporate the dose in the norm $\|x\|^2_2$ of the object $x$ for notational convenience.

Due to the particle counting procedure, the measurements are assumed to be independent and drawn from a Poisson distribution, that is
\begin{align} \label{Poisson_RV}
    y_i \sim \Poisson\bigl(\left|\left\langle a_i,x\right\rangle \right|^2\bigr) \tag{P}.
\end{align}
This amounts to a random quantization of $\PHLIP{a_i}{x}$ to values in $\N$.

Commonly, the phase retrieval problem with Poisson noise is approached via the maximum likelihood method, combined with optimization algorithms such as ADMM~\cite{chang2018total} or Wirtinger flow ~\cite{chen2017solving}. While these methods perform well for large values of $\vert \langle a_i, x\rangle\vert^2$, their applicability is limited for experiments that use an extremely low dose~\cite{ diederichs2024wirtinger, krahmer2024aonebit}.

If the values $\vert \langle a_i,x\rangle\vert^2$ are so small that, with high probability, merely zero or one measurements are observed, one has only binary data and the Poisson distribution can be well approximated by a Bernoulli distribution. That is, one can model the measurements as Bernoulli random variables,
\begin{align} \label{Bernoulli_RV}
    y_i \sim \Bernoulli \left(1 - \exp(-\vert\langle a_i,x\rangle \vert^2) \right). \tag{B}
\end{align}
which corresponds to a random one-bit quantization of $\PHLIP{a_i}{x}$. 

This one-bit measurement model is further motivated by 4D-scanning transmission electron microscopy (4D-STEM) \cite{jannis2022event}, where the image acquisition is performed by an event-driven detector with dead time so that it can only record the first particle arrival, resulting in binary `count/no count' observations.

For solving the noisy phase retrieval problem, we work with a constrained minimization problem optimizing an approximation of an $\ell_2$ loss.
For a similar recovery method, a more general setting with measurements $y_i = f_i(\langle a_i,x\rangle)$ involving a non-linear function $f_i: \R \rightarrow \R$ had been studied in \cite{plan2016generalized}. This approach can also be extended to the problem of phase retrieval; see \cite{genzel2023unified}.

In this paper, we propose and analyze a new approach for phase retrieval with Poisson or Bernoulli measurements.\footnote{A conference version summarizing our findings can be found in \cite{dirksen2025recovery}.} The reconstruction method is summarized in Section~\ref{sec: approach folmulation}. In Section~\ref{sec: recovery guarantees}, we report recovery guarantees of the proposed method for both measurement models~\eqref{Poisson_RV} and~\eqref{Bernoulli_RV}.

\section{Problem formulation}\label{sec: approach folmulation}

The Bernoulli measurements can be interpreted as noisy one-bit representations of the squared phaseless measurements. Consequently, the theory for one-bit compressed sensing can serve as an inspiration for reconstruction algorithms. We follow the idea of \cite{plan2012robust} to seek a vector $z \in \R^n$ whose noiseless measurements $\PHLIP{a_i}{z}$ maximize the correlation with the noisy measurements $y_i$.
Thus, for both measurement models~\eqref{Poisson_RV} and~\eqref{Bernoulli_RV}, we aim to obtain a good approximation for the ground-truth object $x$ by solving the constrained optimization problem 
\begin{alignat}{2} \label{eq: optimization_problem}
   & \text{maximize} && \quad  f_x(z)\\
    & \text{subject to} && \quad \left\| z\right\|_2^2 =  \alpha, \notag
\end{alignat}
with the objective function $f_x: \R^n \rightarrow \R$ given by
\begin{align} \label{eq: objective function}
    f_x(z) = \frac{1}{m} \sum_{i=1}^m y_i \vert \langle a_i,z\rangle\vert^2.
\end{align}
The constraint $ \left\| z\right\|_2^2 =  \alpha $ specifies the dose $\alpha$, which is part of the experimental setup and therefore is known in advance.

Since
\begin{align*}
    \frac{1}{m} \sum_{i=1}^m y_i \vert \langle a_i,z\rangle\vert^2 = z^\mathrm{T} \left(\frac{1}{m} \sum_{i=1}^m y_ia_ia_i^{\mathrm{T}}\right) z,
\end{align*}
the solution to \eqref{eq: optimization_problem} is the leading eigenvector of 
\begin{align}\label{eq: matrix Y def}
    Y \defeq \frac{1}{m} \sum_{i=1}^{m} y_i a_i a_i^{\mathrm{T}}
\end{align}
that satisfies the norm constraint.

Interestingly, computing the leading eigenvector of this random matrix $Y$ is exactly the spectral method proposed in \cite{candes2015phase}.
The spectral method is motivated by the
observation that for Gaussian random measurement vectors $a_i\overset{\text{\tiny{i.i.d.}}}{\sim}~\mathcal{N}(0, I_n)$ and the corresponding noise-free phaseless measurements  $y_i = \PHLIP{a_i}{x}$, the random matrix $Y$ concentrates around its expectation $2xx^{\mathrm{T}} +  \lVert x\rVert_2^2 \cdot I_n $, whose leading eigendirection is exactly parallel to the underlying signal $x$.

While we arrive at the same strategy through a different motivation, the resulting algorithm is the same. Consequently, the theoretical analysis can be approached analogously to
\cite{candes2015phase},  and the main challenge is to incorporate the effect of the Poisson~\eqref{Poisson_RV} and the Bernoulli distribution~\eqref{Bernoulli_RV}, respectively.

For the high-dose scenario, recovery guarantees have been derived in \cite{chen2017solving} for a truncated version of the spectral method; see also \cite{lu2020phase, luo2019optimal} for an asymptotic analysis of such truncated spectral method for more general noise models that also includes Poisson noise as a special case.

More precisely, a truncation $\mathcal{T}(y) = y \cdot \mathds{1}_{y \leq t}$ is applied to the  observations, so $Y$ is replaced by
\begin{align*}
  \frac{1}{m} \sum_{i=1}^{m}\mathcal{T} (y_i) a_i a_i^{\mathrm{T}}.
\end{align*}
For an appropriate truncation, the resulting method is shown to recover the ground truth with a sample complexity of $m=\mathcal{O}(n)$. However, the results do not extend to the low-dose setting, but only apply for $\left\|x\right\|_2 \geq \log^{1.5}(m)$. 
In this paper, we show that we can go beyond this limiting minimal dose condition with a sampling complexity only increased by logarithmic factors.

\section{Recovery guarantees}\label{sec: recovery guarantees}

In this section, we adopt the following notation and assumptions. Let $x\in \mathbb{R}^n$ be the measured object
renormalized by the radiation dose parameter $\alpha \defeq \Vert x \Vert_2^2$ and assume that the measurements vectors $a_i, \ i \in [m],$ are drawn independently from $\mathcal{N}(0, I_n)$. For the phaseless measurements $\{y_i\}_{i=1}^m$ drawn from the Poisson~\eqref{Poisson_RV} or Bernoulli~\eqref{Bernoulli_RV} observation models, we define the matrix $Y$ as in~\eqref{eq: matrix Y def}. We denote by $x_0$ the eigenvector corresponding to the largest eigenvalue of $Y$, normalized so that $\left\| x_0 \right\|_2^2 = \alpha$.

First, we formulate a technical lemma that allows us to compute the expectation of $Y$.

\begin{lemma} \label{l: technical_lemma}

Let $a \sim \Normal(0,I_n)$ and $u,v,w \in \mathbb{S}^{n-1}$. Let $(g,h_1,h_2)$ be multivariate Gaussian with $g,h_1,h_2 \sim \Normal(0,1)$, with $g$ and $h_i$ independent for $i = 1,2$, and 
\begin{equation*} 
    \mathbb{E}(h_1 h_2) =  \frac{\langle v,w\rangle - \left\langle u , v \right\rangle \left\langle u , w \right\rangle }{\sqrt{1 - \left\langle u,v \right\rangle^2}\sqrt{1 - \left\langle u,w \right\rangle^2}}.
\end{equation*}
Then, $(\langle a,u\rangle, \langle a,v \rangle, \langle a,w \rangle)$ is identically distributed with $(g,g',g'')$ for
\begin{equation*}
    g' = \langle u,v \rangle g + \sqrt{1 - \langle u,v\rangle^2} h_1
\end{equation*}
and
\begin{equation*}
    g'' = \langle u,w \rangle g + \sqrt{1 - \langle u,w\rangle^2} h_2.
\end{equation*}
\end{lemma}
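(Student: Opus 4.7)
The plan is to verify equality in distribution by checking that both triples are jointly Gaussian with mean zero and identical covariance matrices; since a multivariate Gaussian distribution is determined by its first two moments, this suffices. On the left, $(\langle a,u\rangle, \langle a,v\rangle, \langle a,w\rangle)$ is jointly Gaussian because it is a linear image of $a \sim \Normal(0,I_n)$. On the right, $(g,g',g'')$ is jointly Gaussian because it is a linear combination of the jointly Gaussian triple $(g,h_1,h_2)$. Both triples are clearly mean-zero, so only the covariance matrices need to be compared.

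For the left-hand triple the computation is immediate: since $\mathbb{E}[\langle a,u\rangle \langle a,v\rangle] = u^{\mathrm{T}} v = \langle u,v\rangle$ and $\|u\|_2 = \|v\|_2 = \|w\|_2 = 1$, the covariance matrix has unit diagonal and off-diagonal entries $\langle u,v\rangle$, $\langle u,w\rangle$, $\langle v,w\rangle$.

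For the right-hand triple, I would compute each entry of the covariance matrix separately. The variance of $g$ is $1$ by assumption, and since $g$ is independent of $h_1$ and $h_2$, a direct expansion gives $\mathrm{Var}(g') = \langle u,v\rangle^2 + (1 - \langle u,v\rangle^2) = 1$, and analogously $\mathrm{Var}(g'') = 1$. The cross-covariances $\mathrm{Cov}(g,g') = \langle u,v\rangle$ and $\mathrm{Cov}(g,g'') = \langle u,w\rangle$ follow from the independence of $g$ and $h_i$. The key calculation, which I expect to be the only nontrivial step, is
\begin{align*}
    \mathrm{Cov}(g',g'')
    &= \langle u,v\rangle \langle u,w\rangle \, \mathbb{E}[g^2] + \sqrt{1-\langle u,v\rangle^2}\sqrt{1-\langle u,w\rangle^2}\, \mathbb{E}[h_1 h_2] \\
    &= \langle u,v\rangle \langle u,w\rangle + \bigl(\langle v,w\rangle - \langle u,v\rangle \langle u,w\rangle\bigr) \\
    &= \langle v,w\rangle,
\end{align*}
after substituting the prescribed value of $\mathbb{E}[h_1 h_2]$; the square roots cancel exactly, which is precisely what motivates the definition in the statement.

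Finally, I would briefly address the degenerate case in which $\langle u,v\rangle^2 = 1$ or $\langle u,w\rangle^2 = 1$, where the prescription for $\mathbb{E}[h_1 h_2]$ has a vanishing denominator. In that situation, $v = \pm u$ (or $w = \pm u$), the corresponding coefficient $\sqrt{1 - \langle u,v\rangle^2}$ vanishes, and $g'$ (respectively $g''$) reduces to $\pm g$. The matching covariances then follow trivially from $\langle v,w\rangle = \pm \langle u,w\rangle$, so the argument extends continuously and no separate case analysis is really needed beyond noting this convention. With all covariance entries matched, equality in distribution follows.
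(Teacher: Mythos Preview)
Your proof is correct and follows essentially the same covariance-matching argument as the paper: both show that the two triples are mean-zero jointly Gaussian with identical covariance matrices. Your presentation is slightly more direct---you verify the stated formulas rather than solving for the coefficients and for $\mathbb{E}[h_1 h_2]$ as the paper does---and you additionally address the degenerate case $\langle u,v\rangle^2 = 1$ or $\langle u,w\rangle^2 = 1$, which the paper leaves implicit.
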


\begin{proof}

For $u,v \in \mathbb{S}^{n-1}$, the inner products $\langle a, u\rangle$ and $\langle a,v \rangle$ are standard Gaussian random variables with covariance
\begin{align*}
    \text{Cov} \left(\left\langle a, u\right\rangle, \left\langle a, v\right\rangle \right) &= \mathbb{E}\left(\left\langle a, u\right\rangle \cdot  \left\langle a, v\right\rangle \right) - \mathbb{E}\left\langle a, u\right\rangle \cdot  \mathbb{E} \left\langle a,v\right\rangle \\[5pt]
    &= u^{\mathrm{T}} \mathbb{E}\left(a a^{\mathrm{T}}\right) v \\[5pt]
    &= \langle u,v\rangle .
\end{align*}

We determine $\gamma_1, \gamma_2 \in \R$ such that $(\langle a, u\rangle, \langle a, v\rangle) \overset{d}{\sim} (g, \gamma_1g+ \gamma_2 h_1)$.
Since $\mathbb{V}(\langle a, v\rangle) = 1$ and $ \text{Cov} \left(\left\langle a, u\right\rangle, \left\langle a, v\right\rangle \right) = \langle u,v\rangle$, we find
\begin{align*}
    \text{Cov} \left(g, \gamma_1 g + \gamma_2 h_1\right) = \mathbb{E}\left(\gamma_1 g^2 + \gamma_2 gh_1 \right) - \mathbb{E}(g) \cdot  \mathbb{E}(\gamma_1 g + \gamma_2 h_1) = \gamma_1 \mathbb{E}(g^2) = \gamma_1.
\end{align*}
That is, $\gamma_1 = \langle u,v\rangle$. Furthermore, 
\begin{equation*}
    \mathbb{V}(\gamma_1 g + \gamma_2 h_1) = \gamma_1^2 + \gamma_2^2,
\end{equation*}
so that $\gamma_2 = \sqrt{1- \gamma_1^2} = \sqrt{1 - \langle u,v\rangle^2}$.

Analogously, we obtain  $(\langle a, u\rangle, \langle a, w\rangle) \overset{d}{\sim} (g, \gamma_1g+ \gamma_2 h_2)$  with 
$\gamma_1 = \langle u,w\rangle$ and $\gamma_2 = \sqrt{1 - \langle u,w\rangle^2}$.

Using that $(\langle a, v\rangle, \langle a, w\rangle) \overset{d}{\sim} ( \langle u,v\rangle g+  \sqrt{1 - \langle u,v\rangle^2} h_1, \langle u,w\rangle g+  \sqrt{1 - \langle u,w\rangle^2} h_2)$ and the independence of $g$ and $h_i$ for $i = 1,2$, we get
\begin{align*}
    \mathbb{E}(  \langle a, v\rangle  \langle a, w\rangle ) = \mathbb{E} \left( \langle u,v \rangle \langle u,w \rangle g^2 + \sqrt{1 - \langle u,v\rangle^2} \sqrt{1 - \langle u,w\rangle^2} h_1 h_2 \right).
\end{align*}
This, together with $\mathbb{E}(aa^{\mathrm{T}} ) = I_n$, suggests to pick  $h_1, h_2$ such that
\begin{equation*} 
    \mathbb{E}(h_1 h_2) =  \frac{ \langle v,w\rangle - \left\langle u , v \right\rangle \left\langle u , w \right\rangle  }{\sqrt{1 - \left\langle u,v \right\rangle^2}\sqrt{1 - \left\langle u,w \right\rangle^2}}.
\end{equation*}
    
\end{proof}

Using \Cref{l: technical_lemma} as a trick to rewrite random variables, we derive the expectation of the random matrix $Y$ for both the Poisson and the Bernoulli model.

\begin{lemma} \label{l: expectation Y}

If the $y_i, \ i \in [m]$, are independent and distributed according to the Poisson model \eqref{Poisson_RV}, then
\begin{align*}
    \mathbb{E}\left[Y\right] = 2 xx^\mathrm{T} + \Vert x \Vert_2^2\cdot I_n.
\end{align*}

If the $y_i, \ i \in [m]$, are distributed according to the Bernoulli model \eqref{Bernoulli_RV}, then
\begin{align*}
    \mathbb{E}\left[Y\right] = \frac{2}{\left(2\Vert x \Vert_2^2 + 1\right)^{\frac{3}{2}}} \cdot xx^\mathrm{T} + \left( 1 - \frac{1}{\left(2\Vert x \Vert_2^2 + 1\right)^{\frac{1}{2}}} \right) \cdot I_n.
\end{align*}

\end{lemma}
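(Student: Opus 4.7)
The plan is to reduce both cases to computing $\mathbb{E}[\psi(\langle a,x\rangle)\, aa^{\mathrm{T}}]$ for a single Gaussian vector $a \sim \Normal(0,I_n)$, where $\psi(t) = t^2$ in the Poisson case and $\psi(t) = 1 - e^{-t^2}$ in the Bernoulli case. By linearity of expectation and the i.i.d.\ assumption, $\mathbb{E}[Y] = \mathbb{E}[y_1 a_1 a_1^{\mathrm{T}}]$, and by the tower property and the parameter of each noise model, $\mathbb{E}[y_1 \mid a_1] = \psi(\langle a_1,x\rangle)$, so everything reduces to the single-sample calculation. To determine the matrix, I test it against an arbitrary pair of unit vectors $v,w \in \mathbb{S}^{n-1}$ and compute the scalar $\mathbb{E}[\psi(\langle a,x\rangle)\langle a,v\rangle\langle a,w\rangle]$. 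Setting $u = x/\|x\|_2$ (the case $x=0$ being trivial) and invoking \Cref{l: technical_lemma}, I replace the jointly Gaussian triple $(\langle a,u\rangle,\langle a,v\rangle,\langle a,w\rangle)$ by $(g,\,\langle u,v\rangle g + \sqrt{1-\langle u,v\rangle^2}\,h_1,\,\langle u,w\rangle g + \sqrt{1-\langle u,w\rangle^2}\,h_2)$ with $g$ independent of $(h_1,h_2)$ and the prescribed covariance between $h_1$ and $h_2$.

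For the Poisson case, $\psi(\langle a,x\rangle) = \|x\|_2^2 g^2$. Expanding the product $\langle a,v\rangle\langle a,w\rangle$, the cross terms $gh_i$ vanish by independence, and a short computation using $\mathbb{E}[g^4] = 3$, $\mathbb{E}[g^2] = 1$, together with the identity $\sqrt{1-\langle u,v\rangle^2}\sqrt{1-\langle u,w\rangle^2}\,\mathbb{E}[h_1 h_2] = \langle v,w\rangle - \langle u,v\rangle\langle u,w\rangle$ from the lemma, collapses the expression to $2\langle u,v\rangle\langle u,w\rangle\|x\|_2^2 + \langle v,w\rangle\|x\|_2^2$. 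Since this holds for all unit $v,w$ and $\|x\|_2^2\, uu^{\mathrm{T}} = xx^{\mathrm{T}}$, I read off $\mathbb{E}[Y] = 2xx^{\mathrm{T}} + \|x\|_2^2\, I_n$.

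For the Bernoulli case, I write $\mathbb{E}[Y] = I_n - \mathbb{E}[e^{-\|x\|_2^2 g^2}\langle a,v\rangle\langle a,w\rangle]$ (again via testing against $v,w$) and use the same substitution. The cross terms $gh_i$ still vanish by independence of $g$ and the $h_i$. The remaining work is the two one-dimensional Gaussian integrals
\begin{align*}
\mathbb{E}\bigl[e^{-\|x\|_2^2 g^2}\bigr] &= (1+2\|x\|_2^2)^{-1/2},\\
\mathbb{E}\bigl[e^{-\|x\|_2^2 g^2} g^2\bigr] &= (1+2\|x\|_2^2)^{-3/2},
\end{align*}
the second being obtained either by differentiating the first in the parameter $\|x\|_2^2$ or by completing the square. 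Substituting back gives
\begin{align*}
\mathbb{E}\bigl[e^{-\|x\|_2^2 g^2}\langle a,v\rangle\langle a,w\rangle\bigr] = (1+2\|x\|_2^2)^{-1/2}\langle v,w\rangle + \langle u,v\rangle\langle u,w\rangle\bigl[(1+2\|x\|_2^2)^{-3/2} - (1+2\|x\|_2^2)^{-1/2}\bigr].
\end{align*}
Using $(1+2\|x\|_2^2)^{-1/2} - (1+2\|x\|_2^2)^{-3/2} = 2\|x\|_2^2 (1+2\|x\|_2^2)^{-3/2}$ and $\|x\|_2^2\, uu^{\mathrm{T}} = xx^{\mathrm{T}}$, subtracting from $I_n$ yields the claimed formula for $\mathbb{E}[Y]$ in the Bernoulli case.

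The Poisson case is a clean fourth-moment computation; the main obstacle is bookkeeping in the Bernoulli case, specifically ensuring that the coefficient of $xx^{\mathrm{T}}$ is assembled correctly from the difference of the two Gaussian integrals, and verifying that the factor $\|x\|_2^{-2}$ arising from $uu^{\mathrm{T}} = xx^{\mathrm{T}}/\|x\|_2^2$ cancels cleanly against the numerator $2\|x\|_2^2$.
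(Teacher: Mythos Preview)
Your proof is correct and follows essentially the same approach as the paper: both reduce to a single-sample expectation via the tower property, invoke \Cref{l: technical_lemma} with $u = x/\|x\|_2$ to decouple the Gaussian triple, and compute the same two one-dimensional integrals $\mathbb{E}[e^{-\|x\|_2^2 g^2}]$ and $\mathbb{E}[g^2 e^{-\|x\|_2^2 g^2}]$ for the Bernoulli case. The only cosmetic difference is that the paper tests against standard basis vectors $v=e_k$, $w=e_\ell$ and splits into diagonal and off-diagonal cases, whereas you test against general unit vectors $v,w$, which handles both cases uniformly and is slightly cleaner.
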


\begin{proof}

We start with the Poisson model. Conditioned on the vectors $a_i, \ i\in [m]$, the expectation of $Y$ is
\begin{align*}
    \mathbb{E} \left[ Y \mid A \right] &= \frac{1}{m} \sum_{i=1}^m \mathbb{E} \left[ y_i \mid A \right] a_i a_i^{\mathrm{T}} = \frac{1}{m} \sum_{i=1}^m  \vert\langle a_i, x\rangle\vert^2 a_i a_i^{\mathrm{T}},
\end{align*}
where $A$ denotes the matrix with rows $a_i^\mathrm{T}$.
Taking the expectation of this yields
\begin{align*}
    \mathbb{E} \left[ Y \right] = \mathbb{E}\left[\mathbb{E}\left[ Y \mid A \right]\right]  
    = \frac{1}{m} \sum_{i=1}^m  \mathbb{E}\left[  \vert\langle a_i, x\rangle\vert^2 a_i a_i^{\mathrm{T}}  \right] = \mathbb{E}\left[ \PHLIP{a_1}{x} a_1 a_1^{\mathrm{T}} \right].
\end{align*}

We determine the expectation of the random matrix entry-wise. That means, for all $k, \ell \in [n]$, we need to find
\begin{align*}
\left(\mathbb{E}\left[  \PHLIP{a_1}{x} a_1 a_1^{\mathrm{T}} \right] \right)_{k,\ell} = \begin{cases}
  \mathbb{E}\left[ \PHLIP{a_1}{x}\left\langle a_1 , e_k \right\rangle^2 \right]  , & \quad k = \ell, \\[8pt]  \mathbb{E}\left[ \PHLIP{a_1}{x} \left\langle a_1 , e_k \right\rangle \left\langle a_1 , e_\ell \right\rangle \right]  , & \quad k\neq \ell,
\end{cases}
\end{align*}
where $e_k$ denotes the $k$th standard basis vector.

Based on \Cref{l: technical_lemma}, we denote in the following $g \defeq \left\langle a_1 , \frac{x}{\left\| x\right\|_2}\right\rangle$, and rewrite
\begin{equation*}
    \left\langle a_1 , e_k \right\rangle = \left\langle \frac{x}{\left\| x\right\|_2} , e_k \right\rangle g + \sqrt{1 - \left\langle \frac{x}{\left\| x\right\|_2} , e_k \right\rangle^2} h_1
\end{equation*}
and
\begin{equation*}
    \left\langle a_1 , e_\ell \right\rangle = \left\langle \frac{x}{\left\| x\right\|_2} , e_\ell \right\rangle g + \sqrt{1 - \left\langle \frac{x}{\left\| x\right\|_2} , e_\ell \right\rangle^2} h_2,
\end{equation*}
with $g,h_1,h_2 \sim \Normal(0,1)$, where $g$ and $h_i$ are independent for $i = 1,2$. Using that $\mathbb{E}\left(a_{1k} a_{1\ell}\right) = 0$ if $k\neq \ell$, we find that 
\begin{equation} \label{eq: exp_h1_h2}
    \mathbb{E}(h_1 h_2) = - \frac{\left\langle \frac{x}{\left\| x\right\|_2} , e_k \right\rangle \left\langle \frac{x}{\left\| x\right\|_2} , e_\ell \right\rangle }{\sqrt{1 - \left\langle \frac{x}{\left\| x\right\|_2} , e_k \right\rangle^2}\sqrt{1 - \left\langle \frac{x}{\left\| x\right\|_2} , e_\ell \right\rangle^2}}.
\end{equation}

Using $\mathbb{E}[ g^4] = 3$ and the independence of $g$ and $h_1$, we can compute
\begin{align*}
 \mathbb{E}\left[  \PHLIP{a_1}{x} \left\langle a_1 , e_k \right\rangle^2 \right] &= \mathbb{E}\left[ \left\|x\right\|_2^2 g^2 \left(\left\langle \frac{x}{\left\| x\right\|_2} , e_k \right\rangle g + \sqrt{1 - \left\langle \frac{x}{\left\| x\right\|_2} , e_k \right\rangle^2} h_1\right)^2 \right]\\[8pt]
 &= \mathbb{E}\left[ \left\langle x , e_k \right\rangle^2 g^4 \right] +\mathbb{E} \left[ \left( \lVert x \rVert_2^2 - \left\langle x , e_k \right\rangle^2\right) g^2 h_1^2 \right]\\[8pt]
    &=   3  \left\langle x , e_k \right\rangle^2 + \left( \left\| x\right\|_2^2 -   \left\langle x , e_k \right\rangle^2 \right)\\[8pt]
    &= 2 x_k^2  +\left\|x\right\|_2^2,
\end{align*}
and, using also \eqref{eq: exp_h1_h2},
\begin{align*}
     \mathbb{E}&\left[ \PHLIP{a_1}{x} \left\langle a_1 , e_k \right\rangle \left\langle a_1 , e_\ell \right\rangle \right]\\[8pt] 
     &= \mathbb{E}\left[\left\|x\right\|_2^2 g^2 \left(\left\langle \frac{x}{\left\| x\right\|_2} , e_k \right\rangle g + \sqrt{1 - \left\langle \frac{x}{\left\| x\right\|_2} , e_k \right\rangle^2} h_1\right)\left(\left\langle \frac{x}{\left\| x\right\|_2} , e_\ell \right\rangle g + \sqrt{1 - \left\langle \frac{x}{\left\| x\right\|_2} , e_\ell \right\rangle^2} h_2\right) \right]\\[8pt]
     &= \mathbb{E}\left[ \langle x , e_k \rangle \langle x , e_\ell \rangle g^4 \right] +\mathbb{E} \left[ \lVert x \rVert_2^2 \sqrt{1 - \left\langle \frac{x}{\left\| x\right\|_2} , e_k \right\rangle^2} \sqrt{1 - \left\langle \frac{x}{\left\| x\right\|_2} , e_\ell \right\rangle^2} g^2 h_1 h_2 \right]\\[8pt]
    &=  3 \left\langle x , e_k \right\rangle\left\langle x, e_\ell \right\rangle -   \left\langle x , e_k \right\rangle\left\langle x, e_\ell \right\rangle\\[8pt]
    &= 2 x_k x_\ell .
\end{align*}

Combining all entries, we find
$\mathbb{E}\left[Y\right] = 2 xx^\mathrm{T} + \left\|x\right\|_2^2 \cdot I_n.$

For the Bernoulli model, we have to investigate
\begin{align*}
    \mathbb{E} \left[ Y \right] &= \mathbb{E}\left[\mathbb{E}\left[ Y \mid A \right]\right] = \mathbb{E}\left[ \frac{1}{m} \sum_{i=1}^m \mathbb{E} \left[ y_i \mid A \right] a_i a_i^{\mathrm{T}} \right] 
    \\[8pt] 
    &=\frac{1}{m} \sum_{i=1}^m  \mathbb{E}\left[ \left(1- \exp(- \vert\langle a_i, x\rangle\vert^2)\right) a_i a_i^{\mathrm{T}}  \right] = \mathbb{E}\left[ \left(1- \exp(- \PHLIP{a_1}{x})\right) a_1 a_1^{\mathrm{T}} \right].
\end{align*}

To obtain the expectation of the random matrix $Y$, we again study its entries
\begin{align*}
\left(\mathbb{E}\left[ \left(1- \exp(- \PHLIP{a_1}{x})\right) a_1 a_1^{\mathrm{T}} \right] \right)_{k,\ell} = \begin{cases}
  \mathbb{E}\left[ \left(1- \exp(- \PHLIP{a_1}{x})\right) \left\langle a_1 , e_k \right\rangle^2 \right]  , & \quad k = \ell, \\[8pt]  \mathbb{E}\left[ \left(1- \exp(- \PHLIP{a_1}{x})\right) \left\langle a_1 , e_k \right\rangle \left\langle a_1 , e_\ell \right\rangle \right]  , & \quad k\neq \ell,
\end{cases}
\end{align*}
for all $k, \ell \in [n]$.

We use the same trick as before to rewrite the random variables $\left\langle a_1 , e_k \right\rangle, \left\langle a_1 , e_\ell \right\rangle$ and $\left\langle a_1 , x \right\rangle$.
However, for the Bernoulli model, we need to further determine
\begin{align*}
    \mathbb{E}\left[ \exp(-\left\|x\right\|_2^2 g^2)\right] 
    &= \frac{1}{\sqrt{2\pi}} \int_{-\infty}^{\infty} \exp(-\left\|x\right\|_2^2t^2) \exp\left(-\frac{t^2}{2}\right) dt\\[5pt]
    &= \frac{1}{\sqrt{2\pi}} \int_{\infty}^{\infty} \exp\left( - \frac{1}{2}\left(\sqrt{2\left\|x\right\|_2^2 + 1} \cdot t\right)^2 \right)dt \\[5pt]
    &= \frac{1}{\sqrt{2\pi}}  \int_{-\infty}^{\infty} \exp\left(-\frac{u^2}{2}\right) du \cdot \frac{1}{\sqrt{2\left\|x\right\|_2^2 + 1}}\\[5pt]
    &= \frac{1}{(2\left\|x\right\|_2^2 + 1)^{\frac{1}{2}}}
\end{align*}
and
\begin{align*}
    &\mathbb{E}\left[ g^2 \exp(-\left\|x\right\|_2^2 g^2)\right] =  \frac{1}{\sqrt{2\pi}} \int_{-\infty}^{\infty} t^2 \exp(-\left\|x\right\|_2^2t^2) \exp\left(-\frac{t^2}{2}\right) dt\\[5pt]
    &\quad= \frac{1}{\sqrt{2\pi}} \left( \left[ -\frac{1}{2\left\|x\right\|_2^2 +1} \exp\left(-\left(\left\|x\right\|_2^2 + \frac{1}{2}\right) t^2 \right) \cdot t\right]_{-\infty}^{\infty}   + \frac{1}{2\left\|x\right\|_2^2 + 1} \int_{\infty}^{\infty} \exp\left( - \frac{2\left(\left\|x\right\|_2^2 + \frac{1}{2}\right)t^2}{2} \right)dt \right) \\[5pt]
    &\quad= \frac{1}{\sqrt{2\pi}} \cdot \frac{1}{2\left\|x\right\|_2^2+1} \cdot \int_{-\infty}^{\infty} \exp\left(-\frac{u^2}{2}\right) du \cdot \frac{1}{\sqrt{2\left\|x\right\|_2^2 + 1}}\\[5pt]
    &\quad= \frac{1}{(2\left\|x\right\|_2^2 + 1)^{\frac{3}{2}}}.
\end{align*}

Now, we can compute
\begin{align*}
     \mathbb{E}&\left[ \left(1- \exp(- \PHLIP{a_1}{x})\right) \left\langle a_1 , e_k \right\rangle^2 \right]
     = \mathbb{E}\left[ \left(1- \exp(-\left\|x\right\|_2^2 g^2)\right) \left(\left\langle \frac{x}{\left\| x\right\|_2} , e_k \right\rangle g + \sqrt{1 - \left\langle \frac{x}{\left\| x\right\|_2} , e_k \right\rangle^2} h_1\right)^2 \right]\\[8pt]
     &= \mathbb{E}\left[\left\langle \frac{x}{\left\| x\right\|_2} , e_k \right\rangle^2  \left(1- \exp(-\left\|x\right\|_2^2 g^2)\right)  g^2\right] + \mathbb{E}\left[ \left(1 - \left\langle \frac{x}{\left\| x\right\|_2} , e_k \right\rangle^2\right)  \left(1- \exp(-\left\|x\right\|_2^2 g^2)\right)  h_1^2 \right]\\[8pt]
    &=  \left(1 - \frac{1}{\left( 2\left\|x\right\|_2^2 + 1\right)^{\frac{3}{2}}} \right)   \left\langle \frac{x}{\left\| x\right\|_2} , e_k \right\rangle^2 +  \left(1 - \frac{1}{\left( 2\left\|x\right\|_2^2 + 1\right)^{\frac{1}{2}}} \right) \left( 1 -   \left\langle \frac{x}{\left\| x\right\|_2} , e_k \right\rangle^2 \right)\\[8pt]
    &= 2 x_k^2 \cdot \frac{1}{\left( 2\left\|x\right\|_2^2 + 1\right)^{\frac{3}{2}}} + \left(1 - \frac{1}{\left( 2\left\|x\right\|_2^2 + 1\right)^{\frac{1}{2}}} \right),
\end{align*}
and, using \eqref{eq: exp_h1_h2},
\begin{align*}
     &\mathbb{E}\left[ \left(1- \exp(- \PHLIP{a_1}{x})\right) \left\langle a_1 , e_k \right\rangle \left\langle a_1 , e_\ell \right\rangle \right]\\[8pt] 
    &=  \mathbb{E}\left[ \left(1- \exp(-\left\|x\right\|_2^2 g^2)\right) \left(\left\langle \frac{x}{\left\| x\right\|_2} , e_k \right\rangle g + \sqrt{1 - \left\langle \frac{x}{\left\| x\right\|_2} , e_k \right\rangle^2} h_1\right) \cdot \right.\\[6pt]
    & \qquad \quad \left. \cdot \left(\left\langle \frac{x}{\left\| x\right\|_2} , e_\ell \right\rangle g + \sqrt{1 - \left\langle \frac{x}{\left\| x\right\|_2} , e_\ell \right\rangle^2} h_2\right) \right]\\[8pt]
     &= \mathbb{E}\left[ \left\langle \frac{x}{\left\| x\right\|_2} , e_k \right\rangle\left\langle \frac{x}{\left\| x\right\|_2} , e_\ell \right\rangle \left(1- \exp(-\left\|x\right\|_2^2 g^2)\right) g^2\right]\\[6pt] 
     & \qquad+ \mathbb{E}\left[\sqrt{1 - \left\langle \frac{x}{\left\| x\right\|_2} , e_k \right\rangle^2}\sqrt{1 - \left\langle \frac{x}{\left\| x\right\|_2} , e_\ell \right\rangle^2} \left(1- \exp(-\left\|x\right\|_2^2 g^2)\right)  h_1 h_2  \right]\\[8pt]
    &=  \left(1 - \frac{1}{\left( 2\left\|x\right\|_2^2 + 1\right)^{\frac{3}{2}}} \right)   \left\langle \frac{x}{\left\| x\right\|_2} , e_k \right\rangle\left\langle \frac{x}{\left\| x\right\|_2} , e_\ell \right\rangle -  \left(1 - \frac{1}{\left( 2\left\|x\right\|_2^2 + 1\right)^{\frac{1}{2}}} \right)    \left\langle \frac{x}{\left\| x\right\|_2} , e_k \right\rangle\left\langle \frac{x}{\left\| x\right\|_2} , e_\ell \right\rangle\\[8pt]
    &= 2 x_k x_\ell \cdot \frac{1}{\left( 2\left\|x\right\|_2^2 + 1\right)^{\frac{3}{2}}}.
\end{align*}
The respective last equalities use
\begin{align*}
    \frac{1}{\left( 2\left\|x\right\|_2^2 + 1\right)^{\frac{1}{2}}} - \frac{1}{\left( 2\left\|x\right\|_2^2 + 1\right)^{\frac{3}{2}}} &= \frac{1}{\left( 2\left\|x\right\|_2^2 + 1\right)^{\frac{1}{2}}} \left( 1 - \frac{1}{2\left\|x\right\|_2^2 + 1} \right)\\[8pt] 
    &= \frac{1}{\left( 2\left\|x\right\|_2^2 + 1\right)^{\frac{1}{2}}} \cdot \frac{ 2\left\|x\right\|_2^2}{2\left\|x\right\|_2^2 + 1}\\[8pt] 
    &=  \frac{ 2\left\|x\right\|_2^2}{\left( 2\left\|x\right\|_2^2 + 1\right)^{\frac{3}{2}}}.
\end{align*}

We conclude that
\begin{align*}
    \mathbb{E}\left[Y\right] = \frac{2}{\left(2\Vert x \Vert_2^2 + 1\right)^{\frac{3}{2}}} \cdot xx^\mathrm{T} + \left( 1 - \frac{1}{\left(2\Vert x \Vert_2^2 + 1\right)^{\frac{1}{2}}} \right) \cdot I_n .
\end{align*}

\end{proof}

Finally, we need to bound $\left\| Y - \mathbb{E}\left[Y\right] \right\| $ for both the Poisson and the Bernoulli model. 
As the Poisson random variables are unbounded, we first want to control the probability that $\max_{i\in [m]} y_i$ is larger than a fixed bound.

\begin{lemma} \label{l: poisson_rv_bound}

If the $y_i, \ i \in [m]$, are distributed according to the Poisson model \eqref{Poisson_RV},  there exists $\beta > 0$ such that
\begin{align*}
    \mathbb{P}\left(\max_{i\in [m]} y_i \geq    \tau  \log(m)  \right) 
   \leq 3 m^{-\beta}
\end{align*}
for $\tau = \max \left\{ e^2  \left\|x\right\|_2^2 , \beta + 1 \right\}$.
    
\end{lemma}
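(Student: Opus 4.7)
The plan is to combine a union bound with a Chernoff estimate for the Poisson distribution conditioned on $a_1$, together with an explicit Gaussian moment computation. Since $y_1 \mid a_1 \sim \Poisson(\lambda)$ with $\lambda \defeq \PHLIP{a_1}{x} = \|x\|_2^2 Z^2$ and $Z \sim \Normal(0,1)$, the standard Chernoff bound for the Poisson distribution gives $\mathbb{P}(y_1 \geq t \mid a_1) \leq e^{-\lambda}(e\lambda/t)^t$ whenever $\lambda \leq t$, where $t \defeq \tau \log m$.

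After a union bound $\mathbb{P}(\max_{i \in [m]} y_i \geq t) \leq m\,\mathbb{P}(y_1 \geq t)$, I would split the tail according to whether $\lambda \leq t$:
\begin{equation*}
    \mathbb{P}(y_1 \geq t) \leq \mathbb{P}(\lambda > t) + \mathbb{E}\bigl[e^{-\lambda}(e\lambda/t)^t\bigr].
\end{equation*}
The first term is a $\chi^2$-tail, bounded by $2\exp(-t/(2\|x\|_2^2))$. The definition of $\tau$ ensures $t/(2\|x\|_2^2) \geq (e^2/2)\log m$ in both regimes (when $\tau = \beta+1$, this uses $\|x\|_2^2 \leq (\beta+1)/e^2$), so this term is bounded by $2\,m^{-e^2/2}$ and contributes negligibly as soon as $\beta < e^2/2 - 1$.

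For the second term, completing the square in the Gaussian density yields
\begin{equation*}
    \mathbb{E}\bigl[Z^{2t} e^{-\|x\|_2^2 Z^2}\bigr] = \frac{(2t-1)!!}{(2\|x\|_2^2+1)^{t+1/2}},
\end{equation*}
and Stirling's approximation $(2t-1)!! \leq (e/\sqrt{\pi})(2t/e)^t$ simplifies the estimate to
\begin{equation*}
    \mathbb{E}\bigl[e^{-\lambda}(e\lambda/t)^t\bigr] \leq \frac{e}{\sqrt{\pi(2\|x\|_2^2+1)}}\, p^t, \qquad p \defeq \frac{2\|x\|_2^2}{2\|x\|_2^2+1} \in (0,1).
\end{equation*}
Multiplying by $m$, the remaining task is to show $\tau\,\log\bigl(1 + 1/(2\|x\|_2^2)\bigr) \geq \beta + 1$ uniformly in $\|x\|_2^2 > 0$, which is equivalent to $m \cdot p^t \leq m^{-\beta}$.

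The hard part will be this final one-variable verification, since the two regimes in the definition of $\tau$ must be handled simultaneously. A short calculus computation shows that the function $\|x\|_2^2 \mapsto \tau\,\log\bigl(1 + 1/(2\|x\|_2^2)\bigr)$ is decreasing on $\{\tau = \beta+1\}$ and increasing on $\{\tau = e^2\|x\|_2^2\}$ (using the elementary inequality $\log(1+y) > y/(1+y)$ for the latter), so its minimum is attained at the crossover $\|x\|_2^2 = (\beta+1)/e^2$. At that point the required inequality reduces to $\log(1 + e^2/(2(\beta+1))) \geq 1$, i.e.\ $\beta + 1 \leq e^2/(2(e-1)) \approx 2.15$, which holds for any sufficiently small $\beta > 0$ (e.g.\ $\beta = 1$). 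Combining the two contributions and absorbing the Stirling constant then yields the claimed $3\,m^{-\beta}$ bound.
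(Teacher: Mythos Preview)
Your argument is correct and proves the lemma, but it takes a genuinely different route from the paper. The paper first conditions on the high-probability event $\{\max_i |\langle a_i,x\rangle|^2 \le \|x\|_2^2\log m\}$, and on that event bounds the Poisson tail \emph{deterministically} via $\mathbb{P}(y_i\ge t\mid A)\le \lambda_i^{\lceil t\rceil}/\lceil t\rceil!$ followed by Stirling, so that the Gaussian randomness and the Poisson randomness are decoupled. You instead keep the $e^{-\lambda}$ factor from the Chernoff bound and integrate it directly against the Gaussian law of $\lambda=\|x\|_2^2 Z^2$, exploiting the closed form $\mathbb{E}[Z^{2t}e^{-\|x\|_2^2 Z^2}]=(2t-1)!!/(2\|x\|_2^2+1)^{t+1/2}$; the Gaussian tail enters only through the truncation $\{\lambda>t\}$ needed for Chernoff to be valid. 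The paper's approach is shorter and avoids your final one-variable minimization of $\tau\log(1+1/(2\|x\|_2^2))$, while your approach is more quantitative (all constants are explicit) and, incidentally, uses the larger threshold $\lambda>t=\tau\log m$ for the $\chi^2$-tail, which makes that step cleaner than the paper's choice $\lambda>\|x\|_2^2\log m$. One cosmetic point: since $t=\tau\log m$ need not be an integer, $(2t-1)!!$ should be read as $2^t\Gamma(t+1/2)/\sqrt{\pi}$ and the Stirling estimate applied to $\Gamma$; this does not affect anything substantive.
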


\begin{proof}

For any $\xi \geq 0 $ and $t>0$, we have 
\begin{align*}
     \sum_{k=0}^{\lceil t \rceil -1} \frac{\xi^k }{k!}  \ &= \ \exp(\xi) - \sum_{k= \lceil t \rceil}^{\infty} \frac{\xi^k }{k!} 
     \ = \ \exp(\xi) - \xi^{\lceil t \rceil} \sum_{k=0}^{\infty} \frac{\xi^{k} }{(k+\lceil t \rceil)!} 
     \ = \ \exp(\xi) - \xi^{\lceil t \rceil} \sum_{k=0}^{\infty} \frac{\xi^{k} }{k!} \prod_{s=k+1}^{k+\lceil t \rceil} \frac{1}{s} \\[8pt]
      &\geq \ \exp(\xi) - \frac{\xi^t}{\lceil t \rceil!} \sum_{k=0}^{\infty} \frac{\xi^{k} }{k!}  
      \ = \ \exp(\xi) \left(1 - \frac{\xi^{\lceil t \rceil}}{\lceil t \rceil!} \right).
\end{align*}
Hence, for all $i\in [m]$ and some $t> 0$, we have
\begin{align*}
     \mathbb{P}\left( y_i < t \mid A \right) \ &=  \ \mathbb{P}\left( y_i \leq \lfloor t \rfloor \mid A \right) \ =  \ \mathbb{P}\left( y_i \leq \lceil t \rceil -1 \mid A \right) \\[8pt] 
     &= \ \sum_{k=0}^{\lceil t \rceil-1} \frac{\exp\left(- \PHLIP{a_i}{x}\right) \left(\PHLIP{a_i}{x}\right)^k }{k!}
     \ = \ \exp\left(- \PHLIP{a_i}{x}\right)\sum_{k=0}^{\lceil t \rceil-1} \frac{ \left(\PHLIP{a_i}{x}\right)^k }{k!}
    \\[8pt] 
    & \ \geq \ 1 -  \frac{\left(\PHLIP{a_i}{x}\right)^{\lceil t \rceil}}{\lceil t \rceil!}.
\end{align*}
Thus, we get
\begin{align} \label{eq: prob_bound_y}
    \mathbb{P}\left(\max_{i\in [m]} y_i \geq t \mid A \right) \ &= \ \mathbb{P}\left(\exists i\in [m]: y_i \geq t \mid A  \right) \ \leq \  \sum_{i=1}^m \mathbb{P}\left( y_i \geq t \right) \notag \\[8pt]
    &= \ \sum_{i=1}^m  \frac{\left(\PHLIP{a_i}{x}\right)^{\lceil t \rceil}}{\lceil t \rceil!} \ \leq \  m\cdot  \frac{\left(\max_{i\in [m]} \PHLIP{a_i}{x}\right)^{\lceil t \rceil}}{\lceil t \rceil!}.
\end{align}

Since $a_i \sim \Normal(0,I_n)$, for any $t>0$, 
\begin{align} \label{eq: prob_bound_a_i_x}
     \mathbb{P}\left(\max_{i\in [m]}\PHLIP{a_i}{x} \geq t \right) \ &= \ \mathbb{P}\left(\exists i\in [m]: \PHLIP{a_i}{x} \geq t \right)  \ \leq  \ \sum_{i=1}^m \mathbb{P}\left( \PHLIP{a_i}{x} \geq t \right) 
   \ = \ 2 m  \exp\left( -\frac{ct}{ \left\| x \right\|_2^2} \right)
\end{align}
for an absolute constant $c>0$.
Hence, there exists $\beta > 0$ such that 
\begin{align*}
     \mathbb{P}\left(\max_{i\in [m]}\PHLIP{a_i}{x} \geq  \left\|x\right\|_2^2\log(m) \right) \leq 2m  \exp\left( -(\beta + 1) \log(m) \right) = 2 m^{-\beta}.
\end{align*}
Combining \eqref{eq: prob_bound_y} for $t = \tau \log(m)$ and \eqref{eq: prob_bound_a_i_x}, we find that for all $\tau > 0$ 
\begin{align*}
    \mathbb{P}\left(\max_{i\in [m]} y_i \geq \tau \log(m)  \right) \leq m\cdot  \frac{\left( \left\|x\right\|_2^2\log(m)\right)^{\lceil \tau \log(m) \rceil}}{\lceil \tau \log(m) \rceil!} + 2m^{-\beta}.
\end{align*}
Using Stirling's approximation (see, e.g., \cite[Equation (C.13)]{foucart2013introduction}), we obtain
\begin{align*}
     \frac{\left(\left\|x\right\|_2^2\log(m)\right)^{\lceil \tau \log(m) \rceil}}{\lceil \tau \log(m) \rceil!} &\leq \frac{\left( \left\|x\right\|_2^2\log(m)\right)^{\lceil \tau \log(m) \rceil}}{\sqrt{ 2\pi \lceil \tau \log(m) \rceil} \lceil \tau \log(m) \rceil^{\lceil \tau \log(m) \rceil} e^{-\lceil \tau \log(m) \rceil} }\\[8pt]
     &\leq \left( \frac{ e  \left\|x\right\|_2^2\log(m) }{ \lceil \tau \log(m) \rceil  } \right)^{\lceil \tau \log(m) \rceil}.
\end{align*}
If $\tau \geq e^2  \left\|x\right\|_2^2$, we get
\begin{align*}
    \mathbb{P}\left(\max_{i\in [m]} y_i \geq \tau \log(m)  \right)  \leq m\cdot  e^{-\lceil \tau  \log(m)\rceil} + 2m^{-\beta}.
\end{align*}
If, furthermore, $ \tau  \geq \beta +1$, we obtain the statement.

\end{proof}

Based on this good event that holds with high probability, we can prove the following deviation bound for the Poisson model. Since Bernoulli measurements are bounded by $1$, we similarly find a deviation bound also for the Bernoulli model.

\begin{lemma} \label{l: deviation_bound}
Let $m\geq n\log(n)$. For any $\beta > 1$, there exist absolute constants $C_{\beta}, \hat{C}_\beta, c_\beta > 0$, such that,
with probability at least $ 1 - n^{-\beta} - m \exp\left(-c_{\beta}n\right) -  m^{-\beta}$,
\begin{align*}
    \left\| Y - \mathbb{E}\left[Y\right] \right\|  \leq   C_{\beta} \left( \Vert x \Vert_2^2  + \hat{C}_\beta \right) \log(m)  \sqrt{\log(n)} \sqrt{\frac{n}{m}} 
\end{align*}
if the $y_i, \ i \in [m]$, are distributed according to the Poisson model \eqref{Poisson_RV}, and, with probability at least $ 1 - n^{-\beta} - m \exp\left(-c_{\beta}n\right)$,
\begin{align*}
    \left\| Y - \mathbb{E}\left[Y\right] \right\|  \leq   C_{\beta}   \sqrt{\log(n)}  \sqrt{\frac{n}{m}} 
\end{align*}
if the $y_i, \ i \in [m]$, are distributed according to the Bernoulli model \eqref{Bernoulli_RV}.

\end{lemma}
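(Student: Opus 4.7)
The quantity $Y - \mathbb{E}[Y] = \frac{1}{m}\sum_{i=1}^m (y_i a_i a_i^{\mathrm{T}} - \mathbb{E}[y_i a_i a_i^{\mathrm{T}}])$ is a sum of independent, centred random matrices, so the natural tool is the matrix Bernstein inequality. The main obstacles to applying it directly are that $\|a_i a_i^{\mathrm{T}}\| = \|a_i\|_2^2$ is unbounded and, in the Poisson case, that $y_i$ is also unbounded. My plan is to restrict to a high-probability ``good'' event on which the summands are bounded in operator norm, apply matrix Bernstein to truncated summands, and then control the bias introduced by the truncation.

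\textbf{Good event and truncated Bernstein.} Set $E_1 = \{\max_{i\in[m]} \|a_i\|_2^2 \leq C' n\}$. Standard $\chi^2$ concentration gives $\mathbb{P}(\|a_1\|_2^2 > C'n) \leq \exp(-c_\beta n)$ for appropriate constants, hence $\mathbb{P}(E_1^c) \leq m \exp(-c_\beta n)$. In the Poisson case, let $E_2 = \{\max_i y_i \leq \tau \log m\}$ with $\tau = \max\{e^2 \|x\|_2^2,\beta+1\}$; \Cref{l: poisson_rv_bound} gives $\mathbb{P}(E_2^c) \leq 3 m^{-\beta}$. These two estimates account for the $m\exp(-c_\beta n)$ and $m^{-\beta}$ terms in the statement. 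Define truncated summands $\tilde{S}_i = \frac{1}{m} y_i a_i a_i^{\mathrm{T}} \mathbf{1}_{F_i}$, where $F_i$ is the per-sample good event ($\{\|a_i\|_2^2 \leq C'n\}$, plus $\{y_i \leq \tau \log m\}$ for Poisson); then $\|\tilde{S}_i\| \leq R$ almost surely, with $R = C'n/m$ for Bernoulli and $R = C'\tau n \log(m)/m$ for Poisson. The matrix variance satisfies
\begin{equation*}
\sigma^2 = \Bigl\| \sum_{i=1}^m \mathbb{E}[(\tilde{S}_i - \mathbb{E}\tilde{S}_i)^2] \Bigr\| \leq \frac{1}{m}\bigl\|\mathbb{E}[y_1^2 \|a_1\|_2^2 a_1 a_1^{\mathrm{T}}]\bigr\|.
\end{equation*}
For Bernoulli, $y_1^2 \leq 1$ immediately yields $\sigma^2 \leq (n+2)/m$. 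For Poisson, conditioning on $a_1$ gives $\mathbb{E}[y_1^2 \mid a_1] = |\langle a_1, x\rangle|^2 + |\langle a_1, x\rangle|^4$, and by rotation invariance one may reduce to $x = \|x\|_2 e_1$ and compute one-dimensional Gaussian moments to get $\sigma^2 \lesssim (\|x\|_2^2 + \|x\|_2^4) n / m$. Matrix Bernstein then yields, for $t \asymp \sqrt{\sigma^2 \log n} + R \log n$, the bound $\|\sum_i (\tilde{S}_i - \mathbb{E}\tilde{S}_i)\| \leq t$ with probability at least $1 - n^{-\beta}$.

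\textbf{Bias and assembly.} The truncation bias $\|\sum_i \mathbb{E}[S_i - \tilde{S}_i]\|$ is controlled by Cauchy-Schwarz against $\mathbb{P}(F_i^c)$: since $\mathbb{P}(F_i^c)$ is exponentially small in $n$ (Bernoulli) or polynomially small in $m$ (Poisson), while Gaussian moments of $y_i \|a_i\|_2^2$ grow only polynomially in $n$ and $\|x\|_2^2$, this bias is absorbed by $t$. On the joint good event $\bigcap_i F_i$ one has $Y = \sum_i \tilde{S}_i$, so a union bound gives $\|Y - \mathbb{E}[Y]\| \lesssim t$ with the probability stated in the lemma. Finally, the hypothesis $m \geq n\log n$ forces $n \log n / m \leq \sqrt{n \log n / m}$, so $R \log n \lesssim \sqrt{\sigma^2 \log n}$ in the Bernoulli case and $R \log n \lesssim \log(m) \sqrt{\sigma^2 \log n}$ in the Poisson case, collapsing $t$ to $C_\beta \sqrt{n\log(n)/m}$ and $C_\beta(\|x\|_2^2 + \hat{C}_\beta)\log(m)\sqrt{n\log(n)/m}$, respectively.

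\textbf{Main obstacle.} The most delicate step I anticipate is the Poisson variance bound: verifying $\|\mathbb{E}[|\langle a_1, x\rangle|^{2k} \|a_1\|_2^2 a_1 a_1^{\mathrm{T}}]\| \lesssim \|x\|_2^{2k} n$ for $k = 1, 2$ is a Gaussian moment calculation that benefits from rotation invariance but still requires care, since one must check that the operator norm is really of order $n$ and not $n^2$. A second point of care is ensuring the truncation bias sits genuinely below $t$, which requires a quantitative trade-off between a polynomial-in-$n$ moment bound and the tail probabilities of $E_1^c$ and $E_2^c$.
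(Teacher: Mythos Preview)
Your approach is correct and reaches the stated bounds, but it is \emph{genuinely different} from the paper's. You truncate the summands to a high-probability good event and apply the matrix Bernstein inequality directly, then handle the resulting truncation bias via Cauchy--Schwarz. The paper instead starts with a symmetrization in probability (Ledoux--Talagrand), which splits the deviation into a Rademacher-weighted sum $\|\frac{1}{m}\sum_i \varepsilon_i y_i a_i a_i^{\mathrm{T}}\|$ and a pointwise term $\sup_z \mathbb{P}(|z^{\mathrm{T}}(Y-\mathbb{E}Y)z| \geq t)$; the Rademacher sum is then bounded via the non-commutative Khintchine inequality in Schatten norms (after conditioning on the same good events you use), and the pointwise term is reduced back to the Rademacher sum by a second, scalar symmetrization together with an explicit variance calculation. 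Your route is more elementary and requires only the standard matrix Bernstein toolbox; the paper's route avoids having to estimate $\mathbb{E}[\tilde S_i]$ and the truncation bias explicitly, at the price of the extra symmetrization machinery and the somewhat lengthy computation of the scalar variance $V$.

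One small imprecision in your write-up: in the Poisson case with $\|x\|_2$ small, the claim ``$R\log n \lesssim \log(m)\sqrt{\sigma^2\log n}$'' is not literally true, since $\sigma^2 \asymp (\|x\|_2^2+\|x\|_2^4)n/m$ vanishes with $\|x\|_2$ while $R\log n$ does not. This does not affect the conclusion, because both $\sqrt{\sigma^2\log n}$ and $R\log n$ are individually dominated by the target $(\|x\|_2^2+\hat C_\beta)\log(m)\sqrt{n\log(n)/m}$ once $m\ge n\log n$; you should simply bound the two Bernstein terms separately against the target rather than against each other.
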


\begin{proof}

We start with a symmetrization trick. By \cite[Equation (6.3)]{ledoux2013probability}, we obtain
\begin{align*}
    &\mathbb{P}\left(  \left\| Y - \mathbb{E}\left[Y\right] \right\| \geq t \right)\\[8pt]
    &\quad =   \mathbb{P}\left(  \sup_{\Vert z \Vert_2 = 1} \left\vert \frac{1}{m} \sum_{i=1}^m  y_i \PHLIP{a_i}{z}  - z^{\mathrm{T}}\mathbb{E}\left[Y \right]z  \right\vert  \geq t \right)\\[8pt]
   &\quad \leq \mathbb{P}\left( \frac{1}{m} \sup_{\Vert z \Vert_2 = 1} \left\vert\sum_{i=1}^m \left( y_i \PHLIP{a_i}{z}  -  y_i' \PHLIP{a_i'}{z} \right) \right\vert  \geq \frac{t}{2} \right) 
   + \sup_{\Vert z \Vert_2 = 1} \mathbb{P}\left( \left\vert \frac{1}{m} \sum_{i=1}^m  y_i \PHLIP{a_i}{z}  - z^{\mathrm{T}}\mathbb{E}\left[Y \right]z  \right\vert \geq \frac{t}{2} \right),
\end{align*}
where $y_i', a_i', \ i \in [m]$, are independent copies of $y_i, a_i, \ i \in [m]$.
With Rademacher random vector $\varepsilon = (\varepsilon_i)_{ i \in [m]}$, independent of $y_i', a_i',y_i, a_i, \ i \in [m]$, the first term can be rewritten and bounded by
\begin{align*}
    &\mathbb{P}\left( \frac{1}{m} \sup_{\Vert z \Vert_2 = 1}\left\vert\sum_{i=1}^m \left( y_i \PHLIP{a_i}{z}  -  y_i' \PHLIP{a_i'}{z} \right) \right\vert  \geq \frac{t}{2} \right) \\[8pt]
   &\quad =  \mathbb{P}\left( \frac{1}{m} \sup_{\Vert z \Vert_2 = 1} \left\vert\sum_{i=1}^m  \varepsilon_i \left( y_i \PHLIP{a_i}{z}  -  y_i' \PHLIP{a_i'}{z} \right)  \right\vert \geq \frac{t}{2} \right) \\[8pt]
   &\quad \leq 2 \mathbb{P}\left( \frac{1}{m} \sup_{\Vert z \Vert_2 = 1}\left\vert \sum_{i=1}^m  \varepsilon_i  y_i \PHLIP{a_i}{z}   \right\vert \geq \frac{t}{4} \right)\\[8pt]
   & \quad = 2 \mathbb{P}\left( \left\Vert\frac{1}{m} \sum_{i=1}^m  \varepsilon_i  y_i   a_ia_i^{\mathrm{T}}  \right\Vert \geq \frac{t}{4} \right),
\end{align*}
using the triangle inequality in the second step. 

 Let $p\geq \log(n)$.
  With the equivalence of the $p$-Schatten norm and the spectral norm for $p \geq \log(n)$ and the non-commutative Khintchine inequality (see, e.g., \cite[Theorem 5.26]{vershynin2010introduction}), 
we obtain
\begin{align*}
      \mathbb{E}_{\varepsilon} \left\Vert\frac{1}{m} \sum_{i=1}^m  \varepsilon_i  y_i   a_ia_i^{\mathrm{T}}  \right\Vert^p &\leq  e^p  \cdot \mathbb{E}_\varepsilon \left\|  \frac{1}{m}  \sum_{i=1}^m \varepsilon_i  y_i  a_i a_i^{\mathrm{T}}\right\|_{C^n_p}^p \\[8pt]
      &\leq \left( C \sqrt{p} \cdot \frac{1}{m}  \left\|   \sum_{i=1}^m   y_i^2a_ia_i^\mathrm{T} a_i a_i^\mathrm{T} \right\|^{\frac{1}{2}}_{C^n_p} \right)^p\\[8pt]
      & \leq \left( C \sqrt{p} \cdot \frac{1}{m}  \left\|  \sum_{i=1}^m  y_i^2  \Vert  a_i \Vert_2^2 a_i a_i^\mathrm{T} \right\|^{\frac{1}{2}}\right)^p,
\end{align*}
with an absolute constant $C > 0$. Furthermore, 
\begin{align*}
   \left\| \sum_{i=1}^m  y_i^2   \Vert  a_i \Vert_2^2  a_i a_i^\mathrm{T} \right\|^{\frac{1}{2}}
   & = \sup_{\left\|z\right\|_2 = 1} \left( \sum_{i=1}^m  y_i^2  \Vert  a_i \Vert_2^2 \langle a_i, z\rangle ^2 \right)^{\frac{1}{2}}.
\end{align*}

We use $\mathbb{E}_A \left\|a_i\right\|_2^2 = n$ and expand 
\begin{align*}
    \left\|a_i\right\|_2^2 =  \left\vert \left\|a_i\right\|_2^2 - \mathbb{E}_A \left\|a_i\right\|_2^2 + n\right\vert.
\end{align*}
Since the random vectors $a_i$ are standard Gaussian, and, hence, their squared entries $a_{ik}^2, \ k \in [n]$, are independent, mean zero, sub-exponential random variables with norm $\left\| a_{ik}^2\right\|_{\psi_1} = 1$, we can use Bernstein's inequality (see, e.g., \cite[Theorem 2.8.1]{vershynin2018high}) to 
show that 
\begin{align*}
    \mathbb{P}_A \left( \vert \left\| a_i \right\|_2^2 - \mathbb{E}_A \left\| a_i \right\|_2^2 \vert \geq n \right) 
    &=  \mathbb{P}_A \left( \left\vert \sum_{k=1}^n a_{ik}^2 - \mathbb{E}_A a_{ik}^2 \right\vert \geq n \right)\\[6pt]
    &\leq 2 \exp \left( -c \min \left\{ \frac{n^2}{\sum_{k=1}^n \left\| a_{ik}^2 \right\|^2_{\psi_1}} , \frac{n}{\max_{k\in[n]} \left\| a_{ik}^2 \right\|_{\psi_1}} \right\} \right)\\[6pt]
    &= 2 \exp \left( -cn \right),
\end{align*}
with an absolute constant $c > 0$. Consequently, with probability at least $1 -2 \exp \left( -cn \right) $ with respect to $a_i, \ i \in [m]$,
\begin{align*}
 \left\|a_i\right\|_2^2 =  \left\vert \left\|a_i\right\|_2^2 - \mathbb{E}_A \left\|a_i\right\|_2^2 + n \right\vert \leq 2n .
\end{align*}
It remains to bound $ \left\| \sum_{i=1}^m   a_i a_i^\mathrm{T} \right\|$. We use \cite[Theorem 5.39]{vershynin2010introduction} to show that, with probability at least $1 - 2\exp\left(-\tilde{c}n\right)$ with respect to $a_i, \ i \in [m]$,
\begin{align*}
  \left\| \sum_{i=1}^m    a_i a_i^\mathrm{T} \right\|^{\frac{1}{2}}
    \leq \sup_{\left\|z\right\|_2 = 1} \left( \sum_{i=1}^m   \langle a_i, z\rangle ^2 \right)^{\frac{1}{2}} =  \sup_{\left\|z\right\|_2 = 1} \left\| Az\right\|_2 
    = \sigma_{\max}(A) \leq   \sqrt{m} + \tilde{C} \sqrt{n},
\end{align*}
where $\tilde{C}, \tilde{c} > 0$ are absolute constants.

Let us define the following events for $\tau > 0$:
\begin{equation*}
    E_1 \defeq \left\{ \max_{i\in [m]} y_i \leq  \tau  \right\}, \quad  \quad E_2 \defeq \left\{ \left\| \sum_{i=1}^m    a_i a_i^\mathrm{T} \right\|^{\frac{1}{2}}
    \leq   \sqrt{m} + \tilde{C} \sqrt{n}  \right\} \quad \text{and} \quad  E_3 \defeq \left\{\max_{i\in [m]} \left\| a_i \right\|^2_2
    \leq  2n  \right\},
\end{equation*}
and let $E \defeq E_1 \cap E_2 \cap E_3$. Note that in the Bernoulli model $E_1$ holds true for $\tau = 1$ with probability 1.

We summarize that, in case of the Bernoulli model, there exist absolute constants $C,c > 0$ so that, for all $p \geq 1$,
\begin{align*}
    \left( \mathbb{E}_\varepsilon \left\Vert\frac{1}{m} \sum_{i=1}^m  \varepsilon_i y_i \mathds{1}_E a_ia_i^{\mathrm{T}}  \right\Vert^p \right)^{\frac{1}{p}} 
     &\leq  C \cdot  \sqrt{p + \log(n)} \left( \sqrt{\frac{n}{m}}+ \frac{n}{m} \right) 
\end{align*}
with probability at least $1 - 4m\exp\left(-cn\right)$ with respect to $a_i, \ i \in [m]$.
For the Poisson model, we use \Cref{l: poisson_rv_bound} to find that
there exist absolute constants $C,c, \beta > 0$ so that, if $\tau \geq \max \left\{ e^2  \left\|x\right\|_2^2 , \beta + 1 \right\} \log(m)$, for all $p \geq 1$,
\begin{align*}
    \left( \mathbb{E}_\varepsilon \left\Vert\frac{1}{m} \sum_{i=1}^m  \varepsilon_i y_i \mathds{1}_E a_ia_i^{\mathrm{T}}  \right\Vert^p \right)^{\frac{1}{p}} 
     &\leq  C \cdot \tau  \sqrt{p + \log(n)} \left( \sqrt{\frac{n}{m}}+ \frac{n}{m} \right) 
\end{align*}
with probability at least $1 - 4 m \exp\left(-cn\right) - 3m^{-\beta}$ with respect to $a_i, \ i \in [m]$.

Consequently, \cite[Lemma A.1]{dirksen2015tail} yields that there exist absolute constants $C,c > 0$ so that, for all $t\geq 1$,
\begin{align*}
     \mathbb{P}_\varepsilon \left( \left\Vert\frac{1}{m} \sum_{i=1}^m  \varepsilon_i  y_i \mathds{1}_E a_ia_i^{\mathrm{T}}  \right\Vert \geq  e C \tau \left( \sqrt{t} + \sqrt{\log(n)} \right)   \left( \sqrt{\frac{n}{m}}+ \frac{n}{m} \right)  \right) \leq \exp\left(- t \right) + 4 m \exp\left(-cn\right) + \mathds{1}_{P} \cdot 3m^{-\beta},
\end{align*}
where $\tau = 1$ and $\mathds{1}_{P} = 0$ for the Bernoulli model and $\tau =  \left(   \left\|x\right\|_2^2 + \beta + 1\right) \log(m)$ and $\mathds{1}_{P} = 1$ in case of the Poisson model.
With $t = \beta \log(n)$, we get
\begin{align*}
     \mathbb{P}_\varepsilon \left( \left\Vert\frac{1}{m} \sum_{i=1}^m  \varepsilon_i  y_i \mathds{1}_E a_ia_i^{\mathrm{T}}  \right\Vert \geq  e C \tau \left( \sqrt{\beta} + 1\right) \sqrt{\log(n)} \left( \sqrt{\frac{n}{m}}+ \frac{n}{m} \right)  \right) \leq n^{-\beta} + 4 m \exp\left(-cn\right) + \mathds{1}_{P} \cdot  3m^{-\beta}.
\end{align*}

Hence, we found that there exist absolute constants $C,c,\beta > 0$ so that
\begin{align*}
    &\mathbb{P}\left( \Vert Y - \mathbb{E}\left[Y\right] \Vert \geq C \tau \sqrt{\log(n)} \left( \sqrt{\frac{n}{m} } + \frac{n}{m}\right) \right)\\[8pt]
    &\quad \leq 2  n^{-\beta} + 8 m \exp\left(-cn\right) + 6m^{-\beta}\\[6pt]
    &\qquad + \sup_{\Vert z \Vert_2 = 1} \mathbb{P}\left(  \left\vert \frac{1}{m} \sum_{i=1}^m  y_i \PHLIP{a_i}{z} - z^{\mathrm{T}}\mathbb{E}\left[Y \right]z  \right\vert \geq  C \tau  \sqrt{\log(n)}  \left( \sqrt{\frac{n}{m} } + \frac{n}{m}\right) \right).
\end{align*}

The last step is to bound the second term on the right-hand side. Using \cite[Lemma C.2]{dirksen2020one}, 
we get that, for any $t > 0$,
\begin{align*}
    \mathbb{P}\left(  \left\vert \frac{1}{m} \sum_{i=1}^m  y_i \PHLIP{a_i}{z} - z^{\mathrm{T}}\mathbb{E}\left[Y \right]z  \right\vert \geq t \right) \leq 2 \mathbb{P}\left(  \left\vert \frac{1}{m} \sum_{i=1}^m  y_i \PHLIP{a_i}{z} - y_i' \PHLIP{a_i'}{z}   \right\vert \geq t  - V \right)
\end{align*}
where $y_i', a_i', \ i \in [m]$, are independent copies of $y_i, a_i, \ i \in [m]$, and
\begin{align*}
    V &= \left( \mathbb{E}\left[\frac{1}{m} \sum_{i=1}^m  y_i \PHLIP{a_i}{z} - \frac{1}{m} \sum_{i=1}^m \mathbb{E}\left( y_i \PHLIP{a_i}{z} \right) \right]^2 \right)^{\frac{1}{2}}\\[8pt]
    &= \frac{1}{m} \left( \mathbb{E}\left[ \sum_{i=1}^m  y_i \PHLIP{a_i}{z} \right]^2  -   m^2 \left(\mathbb{E}\left[ y_1 \PHLIP{a_1}{z} \right]\right)^2  \right)^{\frac{1}{2}}\\[8pt]
     &= \frac{1}{m} \left( (m^2 - m) \left(\mathbb{E} \left[ y_1 \PHLIP{a_1}{z} \right]\right)^2  + m \mathbb{E} \left( y_1^2 \vert\left\langle a_1,z\right\rangle\vert^4 \right) -   m^2 \left(\mathbb{E}\left[ y_1 \PHLIP{a_1}{z}\right] \right)^2  \right)^{\frac{1}{2}}\\[8pt]
    &= \frac{1}{\sqrt{m}} \left(  \mathbb{E} \left( y_1^2 \vert\left\langle a_1,z\right\rangle\vert^4 \right) -  \left(\mathbb{E} \left[ y_1 \PHLIP{a_1}{z} \right]\right)^2   \right)^{\frac{1}{2}}.
\end{align*}

Since, for any $z \in \R^n$,  it is
\begin{align*}
  \mathbb{P}\left(  \left\vert \frac{1}{m} \sum_{i=1}^m  y_i \PHLIP{a_i}{z} - y_i' \PHLIP{a_i'}{z}   \right\vert \geq t  - V \right) \leq  \mathbb{P}\left( \sup_{\Vert v \Vert_2 = 1} \left\vert \frac{1}{m} \sum_{i=1}^m  y_i \PHLIP{a_i}{v} - y_i' \PHLIP{a_i'}{v}   \right\vert \geq t  - V \right),
\end{align*}
we can proceed as before to find
\begin{align*}
   & \sup_{\Vert z \Vert_2 = 1} \mathbb{P}\left(  \left\vert \frac{1}{m} \sum_{i=1}^m  y_i \PHLIP{a_i}{z} - z^{\mathrm{T}}\mathbb{E}\left[Y \right]z  \right\vert \geq  C \tau  \left( \sqrt{t} + \sqrt{\log(n)} \right) \left( \sqrt{\frac{n}{m}}+ \frac{n}{m} \right)  \right) \\[8pt]
    & \quad \leq 4  \mathbb{P}_\varepsilon \left( \left\Vert\frac{1}{m} \sum_{i=1}^m  \varepsilon_i  y_i \mathds{1}_E a_ia_i^{\mathrm{T}}  \right\Vert \geq \frac{1}{2} \left(  C  \tau  \left( \sqrt{t} + \sqrt{\log(n)} \right) \left( \sqrt{\frac{n}{m}}+ \frac{n}{m} \right)  - V \right)  \right) \\[8pt]
    &  \qquad +    16m\exp(-cn) + \mathds{1}_P \cdot 12m^{-\beta} .
\end{align*}

It is left to analyze $V$.
In case of the Poisson model, it is 
\begin{align*}
     \mathbb{E} \left( y_1^2 \vert\left\langle a_1,z\right\rangle\vert^4 \right) 
     &= \mathbb{E} \left[ \mathbb{E} \left[ y_1^2 \vert\left\langle a_1,z\right\rangle\vert^4  \ \Big\vert \  A \right]   \right] \\[8pt]
     &= \mathbb{E} \left[ \left( \vert\left\langle a_1,x\right\rangle\vert^4 + \vert\left\langle a_1,x\right\rangle\vert^2 \right) \vert\left\langle a_1,z\right\rangle\vert^4  \right]
\end{align*}
and
\begin{align*}
    \left( \mathbb{E} \left[ y_1 \PHLIP{a_1}{z} \right]\right)^2 
    &=\left( \mathbb{E} \left[ \mathbb{E} \left[ y_1 \PHLIP{a_1}{z} \ \Big\vert \  A \right]   \right]\right)^2  \\[8pt]
     &= \left( \mathbb{E} \left[ \PHLIP{a_1}{x} \PHLIP{a_1}{z} \right]\right)^2 .
\end{align*}

According to \Cref{l: technical_lemma}, we rewrite $g = \left\langle a_1,\frac{x}{\Vert x \Vert_2}\right\rangle$ and 
\begin{equation*}
    \left\langle a_1,z \right\rangle = \left\langle \frac{x}{\Vert x \Vert_2},z \right\rangle g + \sqrt{1 - \left\langle \frac{x}{\Vert x \Vert_2},z \right\rangle^2} h,
\end{equation*}
for $g,h\sim\Normal(0,1)$ with $g,h$ independent. Then, 
\begin{align*}
     &\mathbb{E} \left[ \left( \vert\left\langle a_1,x\right\rangle\vert^4 + \vert\left\langle a_1,x\right\rangle\vert^2 \right) \vert\left\langle a_1,z\right\rangle\vert^4  \right]\\[8pt] 
     & \quad =  \mathbb{E} \left[ \left( \Vert x \Vert_2^4 g^4 +  \Vert x \Vert_2^2 g^2 \right) \left( \left\langle \frac{x}{\Vert x \Vert_2},z \right\rangle g + \sqrt{1 - \left\langle \frac{x}{\Vert x \Vert_2},z \right\rangle^2} h\right)^4 \right]\\[8pt]
     &\quad = 105  \Vert x \Vert_2^4  \left\langle \frac{x}{\Vert x \Vert_2},z \right\rangle^4 +  90  \Vert x \Vert_2^4  \left\langle \frac{x}{\Vert x \Vert_2},z \right\rangle^2 \left(1 - \left\langle \frac{x}{\Vert x \Vert_2},z \right\rangle^2\right) + 9 \Vert x \Vert_2^4  \left(1 - \left\langle \frac{x}{\Vert x \Vert_2},z \right\rangle^2\right)^2\\[8pt]
     & \qquad + 15  \Vert x \Vert_2^2  \left\langle \frac{x}{\Vert x \Vert_2},z \right\rangle^4 +  18  \Vert x \Vert_2^2  \left\langle \frac{x}{\Vert x \Vert_2},z \right\rangle^2 \left(1 - \left\langle \frac{x}{\Vert x \Vert_2},z \right\rangle^2\right) + 3 \Vert x \Vert_2^2  \left(1 - \left\langle \frac{x}{\Vert x \Vert_2},z \right\rangle^2\right)^2\\[8pt]
      & \quad =  24 \Vert x \Vert_2^4  \left\langle \frac{x}{\Vert x \Vert_2},z \right\rangle^4 + 72 \Vert x \Vert_2^4  \left\langle \frac{x}{\Vert x \Vert_2},z \right\rangle^2 + 9 \Vert x \Vert_2^4  +  12  \Vert x \Vert_2^2  \left\langle \frac{x}{\Vert x \Vert_2},z \right\rangle^2 +  3 \Vert x \Vert_2^2
\end{align*}
and
\begin{align*}
     \mathbb{E} \left[ \PHLIP{a_1}{x} \PHLIP{a_1}{z} \right] & =  \mathbb{E} \left[  \Vert x \Vert_2^2 g^2 \left( \left\langle \frac{x}{\Vert x \Vert_2},z \right\rangle g + \sqrt{1 - \left\langle \frac{x}{\Vert x \Vert_2},z \right\rangle^2} h\right)^2 \right]\\[8pt]
     &\quad = 2 \Vert x \Vert_2^2 \left\langle \frac{x}{\Vert x \Vert_2},z \right\rangle^2 + \Vert x \Vert_2^2.
\end{align*}
Hence,
\begin{align*}
    V &= \frac{1}{\sqrt{m}} \left( 20 \Vert x \Vert_2^4  \left\langle \frac{x}{\Vert x \Vert_2},z \right\rangle^4 + 68 \Vert x \Vert_2^4  \left\langle \frac{x}{\Vert x \Vert_2},z \right\rangle^2 + 8 \Vert x \Vert_2^4  +  12  \Vert x \Vert_2^2  \left\langle \frac{x}{\Vert x \Vert_2},z \right\rangle^2 +  3 \Vert x \Vert_2^2   \right)^{\frac{1}{2}}\\[8pt]
    &\leq \frac{1}{\sqrt{m}} \left( 10 \Vert x \Vert_2^2  +  4 \Vert x \Vert_2  \right).
\end{align*}

For the Bernoulli model, we have to consider 
\begin{align*}
     \mathbb{E} \left( y_1^2 \vert\left\langle a_1,z\right\rangle\vert^4 \right)
     &= \mathbb{E} \left[ \left(1 - \exp \left( - \PHLIP{a_i}{x}\right) \right) \vert\left\langle a_1,z\right\rangle\vert^4  \right]
\end{align*}
and
\begin{align*}
    \left( \mathbb{E} \left[ y_1 \PHLIP{a_1}{z} \right]\right)^2 
     &= \left( \mathbb{E} \left[  \left(1 - \exp \left( - \PHLIP{a_i}{x}\right) \right)  \PHLIP{a_1}{z} \right]\right)^2 .
\end{align*}

Apart from the expectations derived in the proof of \Cref{l: expectation Y}, we require
\begin{align*}
    \mathbb{E}\left[ g^4 \exp(-\left\|x\right\|_2^2 g^2)\right] 
    &=  \frac{1}{\sqrt{2\pi}} \int_{-\infty}^{\infty} t^4 \exp(-\left\|x\right\|_2^2t^2) \exp\left(-\frac{t^2}{2}\right) dt\\[8pt]
    &= \frac{1}{\sqrt{2\pi}} \left( \left[ -\frac{1}{2\left\|x\right\|_2^2 +1} \exp\left(-\left(\left\|x\right\|_2^2 + \frac{1}{2}\right) t^2 \right) \cdot t^3\right]_{-\infty}^{\infty} \right. \\[8pt]
    & \left. \quad + \ \frac{1}{2\left\|x\right\|_2^2 + 1} \int_{\infty}^{\infty} 3t^2 \exp\left( - \frac{2\left(\left\|x\right\|_2^2 + \frac{1}{2}\right)t^2}{2} \right)dt \right) \\[8pt]
    &= \frac{3}{2\left\|x\right\|_2^2+1} \cdot  \frac{1}{\sqrt{2\pi}} \int_{-\infty}^{\infty} t^2 \exp(-\left\|x\right\|_2^2t^2) \exp\left(-\frac{t^2}{2}\right) dt\\[8pt]
    &= \frac{3}{2\left\|x\right\|_2^2+1} \cdot  \frac{1}{\sqrt{2\pi}}
    \left( \left[ -\frac{1}{2\left\|x\right\|_2^2 +1} \exp\left(-\left(\left\|x\right\|_2^2 + \frac{1}{2}\right) t^2 \right) \cdot t\right]_{-\infty}^{\infty} \right. \\[8pt]
    & \left. \quad + \ \frac{1}{2\left\|x\right\|_2^2 + 1} \int_{\infty}^{\infty}  \exp\left( - \frac{2\left(\left\|x\right\|_2^2 + \frac{1}{2}\right)t^2}{2} \right)dt \right) \\[8pt]
&= \frac{3}{2\left\|x\right\|_2^2+1} \cdot  \frac{1}{\sqrt{2\pi}}
    \cdot \frac{1}{2\left\|x\right\|_2^2+1} \cdot \int_{-\infty}^{\infty} \exp\left(-\frac{u^2}{2}\right) du \cdot \frac{1}{\sqrt{2\left\|x\right\|_2^2 + 1}}
     \\[8pt]
&= \frac{3}{2\left\|x\right\|_2^2+1} \cdot  \frac{1}{(2\left\|x\right\|_2^2 + 1)^{\frac{3}{2}}} \\[8pt]
    &= \frac{3}{(2\left\|x\right\|_2^2 + 1)^{\frac{5}{2}}}.
\end{align*}

Then, rewriting $\left\langle a_1,\frac{x}{\Vert x \Vert_2}\right\rangle$ and $\left\langle a_1,z \right\rangle$ as before for the Poisson model, we find
\begin{align*}
     & \mathbb{E} \left[ \left(1 - \exp \left( - \PHLIP{a_i}{x}\right) \right) \vert\left\langle a_1,z\right\rangle\vert^4  \right] \\[8pt] 
     &\quad =  \mathbb{E} \left[ \left( 1 - \exp \left(- \Vert x \Vert_2^2 g^2 \right)  \right)  \left( \left\langle \frac{x}{\Vert x \Vert_2},z \right\rangle g + \sqrt{1 - \left\langle \frac{x}{\Vert x \Vert_2},z \right\rangle^2} h\right)^4 \right]\\[8pt]
     &\quad = \left\langle \frac{x}{\Vert x \Vert_2},z \right\rangle^4 \left( 3 -  \frac{3}{(2\left\|x\right\|_2^2 + 1)^{\frac{5}{2}}}\right)  + 6 \left\langle \frac{x}{\Vert x \Vert_2},z \right\rangle^2 \left( 1- \left\langle \frac{x}{\Vert x \Vert_2},z \right\rangle^2\right) \left( 1 - \frac{1}{(2\left\|x\right\|_2^2 + 1)^{\frac{3}{2}}} \right)\\[8pt] 
     &\qquad + \left( 1- \left\langle \frac{x}{\Vert x \Vert_2},z \right\rangle^2\right)^2\left( 3 - \frac{3}{(2\left\|x\right\|_2^2 + 1)^{\frac{1}{2}}} \right) 
\end{align*}
and
\begin{align*}
     & \mathbb{E} \left[ \left(1 - \exp \left( - \PHLIP{a_i}{x}\right) \right) \vert\left\langle a_1,z\right\rangle\vert^2  \right] \\[8pt] 
     &\quad =  \mathbb{E} \left[ \left( 1 - \exp \left(- \Vert x \Vert_2^2 g^2 \right)  \right)  \left( \left\langle \frac{x}{\Vert x \Vert_2},z \right\rangle g + \sqrt{1 - \left\langle \frac{x}{\Vert x \Vert_2},z \right\rangle^2} h\right)^2 \right]\\[8pt]
     &\quad = \left\langle \frac{x}{\Vert x \Vert_2},z \right\rangle^2 \left( 1 -  \frac{1}{(2\left\|x\right\|_2^2 + 1)^{\frac{3}{2}}} \right)  + \left( 1 - \left\langle \frac{x}{\Vert x \Vert_2},z \right\rangle^2 \right) \left( 1 -  \frac{1}{(2\left\|x\right\|_2^2 + 1)^{\frac{1}{2}}} \right) .
\end{align*}
Consequently, 
\begin{align*}
    V  &= \frac{1}{\sqrt{m}} \left( \left\langle \frac{x}{\Vert x \Vert_2},z \right\rangle^4 \left( 2 -  \frac{3}{(2\left\|x\right\|_2^2 + 1)^{\frac{5}{2}}}  +  \frac{2}{(2\left\|x\right\|_2^2 + 1)^{\frac{3}{2}}}  -  \frac{1}{(2\left\|x\right\|_2^2 + 1)^{\frac{6}{2}}} \right) \right.\\[8pt] 
    & \left. \qquad \qquad + 2 \left\langle \frac{x}{\Vert x \Vert_2},z \right\rangle^2 \left( 1- \left\langle \frac{x}{\Vert x \Vert_2},z \right\rangle^2\right) \left( 2 - \frac{2}{(2\left\|x\right\|_2^2 + 1)^{\frac{3}{2}}} + \frac{1}{(2\left\|x\right\|_2^2 + 1)^{\frac{1}{2}}} - \frac{1}{(2\left\|x\right\|_2^2 + 1)^{\frac{4}{2}}}\right) \right.\\[8pt] 
     & \left. \qquad \qquad + \left( 1- \left\langle \frac{x}{\Vert x \Vert_2},z \right\rangle^2\right)^2\left( 2 - \frac{1}{(2\left\|x\right\|_2^2 + 1)^{\frac{1}{2}}} -  \frac{1}{(2\left\|x\right\|_2^2 + 1)^{\frac{4}{2}}}\right) \right)^{\frac{1}{2}}\\[8pt]
     &\leq \sqrt{\frac{12}{m}}.
\end{align*}

 We summarize that in both cases there exists a constant $\hat{C}> 0$ so that
 \begin{align*}
  &\mathbb{P}_\varepsilon \left( \left\Vert\frac{1}{m} \sum_{i=1}^m  \varepsilon_i  y_i \mathds{1}_E a_ia_i^{\mathrm{T}}  \right\Vert \geq \frac{1}{2} \left( C \tau \left( \sqrt{t} + \sqrt{\log(n)} \right) \left( \sqrt{\frac{n}{m}}+ \frac{n}{m} \right)  - V \right)  \right)\\[8pt]
  & \quad \leq \mathbb{P}_\varepsilon \left( \left\Vert\frac{1}{m} \sum_{i=1}^m  \varepsilon_i  y_i \mathds{1}_E a_ia_i^{\mathrm{T}}  \right\Vert \geq \frac{1}{2} \left( \hat{C} \tau  \left( \sqrt{t} + \sqrt{\log(n)} \right)\left( \sqrt{\frac{n}{m}}+ \frac{n}{m} \right)  \right)  \right), 
\end{align*}
which we already know is bounded from above by $n^{-\beta}$.

All together, we showed that there exist absolute constants $C_{\beta}, \hat{C}_\beta, c_{\beta},\beta > 0$ so that
\begin{align*}
    \mathbb{P}\left( \Vert Y - \mathbb{E}\left[Y\right] \Vert \geq C_{\beta} \tau \sqrt{\log(n)} \left( \sqrt{\frac{n}{m} } + \frac{n}{m}\right) \right) \leq  n^{-\beta} + m \exp\left(-c_{\beta}n\right) + \mathds{1}_P \cdot m^{-\beta},
\end{align*}
where $\tau = 1$ for the Bernoulli model and $\tau =  \left(   \left\|x\right\|_2^2 + \hat{C}_\beta \right) \log(m)$ in case of the Poisson model.

Under the assumption $m\geq n\log(n)$ we obtain the statement.

\end{proof}

With this deviation bound at hand, we can evaluate how well $x_0$ approximates the ground-truth object $x$ in the case of the Poisson and the Bernoulli measurement model. For this purpose, we define the phaseless distance between two vectors $u, v\in\mathbb{R}^n$ as $$\dist{u}{v} = \min_{\gamma\in \{-1, 1\}} \Vert u - \gamma v\Vert_2.$$

\begin{theorem} \label{thm: recovery_guarantee}
 Let $m\geq n\log(n)$.
If the $y_i, \ i \in [m]$, are independent and distributed according to the Poisson model \eqref{Poisson_RV}, there exist absolute constants  ${C_\beta, c_\beta,\beta >0}$, such that,
with probability at least $  1 - n^{-\beta} - m \exp\left(-c_{\beta}n\right) -  m^{-\beta}$,
the vector $x_0$ satisfies
\begin{align*}
        \frac{\dist{x_0}{x}^2}{\left\|x\right\|_2^2} \leq  2 C_\beta \left( 1 + \frac{\hat{C}_\beta}{\alpha}\right) \log(m)  \sqrt{\log(n)}   \sqrt{\frac{n}{m}} .
\end{align*}

If the $y_i, \ i \in [m]$, are independent and distributed according to the Bernoulli model \eqref{Bernoulli_RV}, there exist absolute constants  ${C_\beta,c_\beta, \beta >0}$, such that,
with probability at least $  1 - n^{-\beta} - m \exp\left(-c_{\beta}n\right)$,
the vector $x_0$ satisfies
\begin{align*}
        \frac{\dist{x_0}{x}^2}{\left\|x\right\|_2^2} \leq  2 C_\beta \frac{\left(2\alpha + 1\right)^{\frac{3}{2}}}{\alpha} \sqrt{\log(n)}  \sqrt{\frac{n}{m}} .
\end{align*}
    
\end{theorem}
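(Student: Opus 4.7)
The plan is to apply a Davis--Kahan style perturbation argument, exploiting the fact that by \Cref{l: expectation Y} the matrix $\mathbb{E}[Y]$ has the form $\lambda\,xx^{\mathrm{T}} + \mu\,I_n$ in both models, so its leading eigenvector is parallel to $x$. Writing $v := x/\|x\|_2$ and decomposing $x_0 = \beta v + w$ with $w\perp v$, the norm constraint $\|x_0\|_2^2 = \alpha = \|x\|_2^2$ automatically forces $\beta^2 + \|w\|_2^2 = \alpha$, and in particular $|\beta| \leq \|x\|_2$ and $\|w\|_2^2\leq\alpha$.

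The first step is to read off the spectral gap. Using $\langle x_0,x\rangle = \beta\|x\|_2$, a direct computation gives
\[
x^{\mathrm{T}}\mathbb{E}[Y]x - x_0^{\mathrm{T}}\mathbb{E}[Y]x_0 \;=\; \alpha\lambda\,\|w\|_2^2,
\]
where $\lambda = 2$ in the Poisson model and $\lambda = 2(2\alpha+1)^{-3/2}$ in the Bernoulli model. Since $x_0$ is the leading eigenvector of $Y$ with $\|x_0\|_2^2 = \|x\|_2^2$, we have $x_0^{\mathrm{T}} Yx_0 \geq x^{\mathrm{T}} Yx$, so that
\[
\alpha\lambda\,\|w\|_2^2 \;\leq\; x^{\mathrm{T}}(\mathbb{E}[Y]-Y)x + x_0^{\mathrm{T}}(Y-\mathbb{E}[Y])x_0 \;\leq\; 2\alpha\,\|Y-\mathbb{E}[Y]\|,
\]
which yields $\|w\|_2^2 \leq (2/\lambda)\,\|Y-\mathbb{E}[Y]\|$.

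The second step is to convert $\|w\|_2^2$ into the phaseless distance. Taking $\gamma = \sign(\beta)$ gives $\dist{x_0}{x}^2 \leq (|\beta|-\|x\|_2)^2 + \|w\|_2^2$, and since $|\beta| = \sqrt{\alpha-\|w\|_2^2}$ with $\|w\|_2^2\leq\alpha$ one obtains $(|\beta|-\sqrt{\alpha})^2 \leq \|w\|_2^4/\alpha \leq \|w\|_2^2$, hence $\dist{x_0}{x}^2 \leq 2\|w\|_2^2$. Combining the two estimates yields the master inequality
\[
\frac{\dist{x_0}{x}^2}{\alpha} \;\leq\; \frac{4}{\lambda\alpha}\,\|Y-\mathbb{E}[Y]\|.
\]
Plugging in $4/(\lambda\alpha) = 2/\alpha$ and the Poisson deviation bound from \Cref{l: deviation_bound} delivers the prefactor $2C_\beta(1+\hat C_\beta/\alpha)$; plugging in $4/(\lambda\alpha) = 2(2\alpha+1)^{3/2}/\alpha$ and the Bernoulli bound delivers $2C_\beta(2\alpha+1)^{3/2}/\alpha$. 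The failure probabilities are exactly those of the invoked deviation bounds.

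The step requiring most care is the first: not the routine identity $\alpha\lambda\|w\|_2^2$, but identifying $x_0^{\mathrm{T}}\mathbb{E}[Y]x_0$ versus $x^{\mathrm{T}}\mathbb{E}[Y]x$ as the correct quantities to compare, rather than trying to control $\|x_0x_0^{\mathrm{T}} - xx^{\mathrm{T}}\|$ via a black-box eigenvector perturbation bound. Once the comparison is set up, the remaining arithmetic is a short chain of inequalities, and the low-dose regime enters only through the effective spectral gap $\alpha\lambda \asymp \alpha/(2\alpha+1)^{3/2}$ of $\mathbb{E}[Y]$ in the Bernoulli case, which is exactly the source of the $(2\alpha+1)^{3/2}/\alpha$ blow-up and quantifies the dose--sample-complexity trade-off advertised in the introduction.
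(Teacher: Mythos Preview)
Your proof is correct and follows essentially the same strategy as the paper: both derive the key inequality $x^{\mathrm{T}}\mathbb{E}[Y]x - x_0^{\mathrm{T}}\mathbb{E}[Y]x_0 \leq 2\alpha\,\|Y-\mathbb{E}[Y]\|$ from the maximizing property $x_0^{\mathrm{T}}Yx_0 \geq x^{\mathrm{T}}Yx$, and then convert this into a distance bound via the explicit form of $\mathbb{E}[Y]$ from \Cref{l: expectation Y} before invoking \Cref{l: deviation_bound}. Your packaging via the orthogonal decomposition $x_0=\beta v+w$ and the estimate $\dist{x_0}{x}^2\leq 2\|w\|_2^2$ is a mild variant of the paper's route through $|\langle x_0,x\rangle|$ and the inequality $2-2\sqrt{1-t}\leq 2t$; it has the minor advantage of not needing the side condition $\delta\leq\alpha$ (resp.\ its Bernoulli analog), though in the paper's version that condition is harmless since the stated bound is trivially $\geq 2$ otherwise.
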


\begin{proof}
    
We aim to bound the phaseless distance
\begin{align*}
    \dist{x_0}{x}^2 &= \min_{\gamma \in \left\{-1,1\right\}} \left\| x_0 - \gamma x\right\|_2^2  \\[8pt]
    &= \min_{\gamma \in \left\{-1,1\right\}} \left( \left\| x_0 \right\|_2^2 - 2 \gamma \left\langle x_0,x\right\rangle + \left\| x\right\|_2^2 \right)\\[8pt]
    &= \left\| x_0 \right\|_2^2 - 2 \left\vert \left\langle x_0,x\right\rangle \right\vert + \left\| x\right\|_2^2\\[8pt]
    &= 2\alpha - 2 \left\vert \left\langle x_0,x\right\rangle \right\vert.
\end{align*}
Thus, we need to establish a lower bound for $\left\vert \left\langle x_0,x\right\rangle \right\vert$. We proceed similarly to \cite{candes2015phase}, using that $x_0$ is the eigenvector corresponding to the largest eigenvalue $\lambda_0$ of $Y$ normalized so that $\left\| x_0 \right\|_2^2 = \alpha$. We use that
\begin{align*}
     \left\vert \alpha \lambda_0 -  x_0^{\mathrm{T}}\mathbb{E}\left[ Y \right] x_0  \right\vert 
    &\ = \left\vert x_0^{\mathrm{T}} Y x_0 - x_0^{\mathrm{T}}\mathbb{E}\left[ Y \right] x_0 \right\vert  \\[8pt]
    &\ = \left\vert x_0^{\mathrm{T}} \left(Y - \mathbb{E}\left[ Y \right] \right) x_0 \right\vert\\[8pt]
    &\leq  \left\|x_0\right\|_2^2 \left\| Y - \mathbb{E}\left[ Y \right] \right\|.
\end{align*}
Setting $\delta \defeq \left\| Y - \mathbb{E}\left[ Y \right] \right\|$, the inequality above yields
\begin{align*}
    x_0^{\mathrm{T}}\mathbb{E}\left[ Y \right] x_0  \geq  \alpha \lambda_0  - \alpha\delta .
\end{align*}

Further, the largest eigenvalue $\lambda_0$ of $Y$ satisfies
\begin{align*}
    \lambda_0 &\geq \frac{1}{\left\|x\right\|_2^2}x^\mathrm{T}Y x =  \frac{1}{\left\|x\right\|_2^2}x^\mathrm{T} \left(Y - \mathbb{E}\left[Y\right] + \mathbb{E}\left[Y\right]\right) x\\[6pt]
    &\geq -\delta + \frac{1}{\left\|x\right\|_2^2}x^\mathrm{T}\mathbb{E}\left[Y\right] x,
\end{align*}
so that 
\begin{align*}
     x_0^{\mathrm{T}}\mathbb{E}\left[ Y \right] x_0  \geq - 2\alpha \delta +  x^{\mathrm{T}}\mathbb{E}\left[ Y \right] x.
\end{align*}

For the Poisson model, \Cref{l: expectation Y} provides $ \mathbb{E}\left[Y\right] = 2 xx^\mathrm{T} + \Vert x \Vert_2^2\cdot I_n$, and hence
\begin{align*}
     2 \PHLIP{x_0}{x} + \Vert x_0 \Vert_2^2 \Vert x \Vert_2^2  \geq - 2\alpha \delta + 3 \Vert x \Vert_2^4,
\end{align*}
so we conclude
\begin{align*}
     \PHLIP{x_0}{x}   \geq - \alpha \delta + \alpha^2.
\end{align*}
Consequently, 
\begin{align*}
\frac{\dist{x_0}{x}^2}{\alpha} \leq  2 - 2 \sqrt{ 1 - \frac{1}{\alpha} \cdot \delta \ } \ \leq 2 \frac{\delta}{\alpha} 
\end{align*}
if $\delta \leq \alpha$.
Applying the first part of \Cref{l: deviation_bound} to estimate $\delta$ yields the result for the Poisson model.

For the Bernoulli model, \Cref{l: expectation Y} yields
\begin{align*}
   & \frac{2}{\left(2\Vert x \Vert_2^2 + 1\right)^{\frac{3}{2}}} \cdot \PHLIP{x_0}{x} + \left( 1 - \frac{1}{\left(2\Vert x \Vert_2^2 + 1\right)^{\frac{1}{2}}} \right) \cdot \Vert x_0 \Vert_2^2\\[8pt] 
    & \quad \geq - 2\alpha \delta +  \frac{2}{\left(2\Vert x \Vert_2^2 + 1\right)^{\frac{3}{2}}} \cdot \Vert x \Vert_2^4 +  \left( 1 - \frac{1}{\left(2\Vert x \Vert_2^2 + 1\right)^{\frac{1}{2}}} \right) \cdot \Vert x \Vert_2^2.
\end{align*}
This is equivalent to
\begin{align*}
    \PHLIP{x_0}{x} &\geq  \frac{\left(2\alpha + 1\right)^{\frac{3}{2}}}{2}   \left( \alpha^2  \cdot \frac{2}{\left(2\alpha + 1\right)^{\frac{3}{2}}}  - 2 \alpha\delta  \right)\\[8pt]
 &= \alpha^2 -  \alpha\left(2\alpha + 1\right)^{\frac{3}{2}}\delta   .
\end{align*}
Thus, we obtain
\begin{align*}
\frac{\dist{x_0}{x}^2}{\alpha} \leq  2 - 2 \sqrt{ 1 - \frac{\left(2\alpha + 1\right)^{\frac{3}{2}}}{\alpha} \cdot \delta \ } \ \leq 2 \cdot \frac{\left(2\alpha + 1\right)^{\frac{3}{2}}}{\alpha} \cdot \delta.
\end{align*}

Applying the second part of \Cref{l: deviation_bound} to obtain an upper bound for $\delta$ completes the proof.

\end{proof}

\begin{remark}

From the bound in \Cref{thm: recovery_guarantee}, we conclude that  
\begin{equation} \label{eq: sampling complexity poisson}
    \frac{m}{\log(m)^2} = \mathcal{O}\left(  n\log(n) \cdot \left(1 + \frac{1}{\alpha}\right)^2 \right)
\end{equation}
 measurements are required to recover the ground-truth object up to a small constant relative error from Poisson phaseless measurements by
solving \eqref{eq: optimization_problem}.

Furthermore, we find that  
\begin{equation}\label{eq: sampling complexity bernoulli}
m = \mathcal{O}\left( n \log(n) \cdot \frac{\left(2\alpha + 1\right)^{3}}{\alpha^2}  \right)
\end{equation} 
measurements are required to recover the ground-truth object up to a small constant relative error from Bernoulli phaseless measurements when 
solving the optimization problem \eqref{eq: optimization_problem}.

\end{remark}

We note that the factor $\left(2\alpha + 1\right)^{3} / \alpha^2$ in \eqref{eq: sampling complexity bernoulli} is minimal for $\alpha = 1$ and its contribution can be interpreted as follows. For very small, as well as for rather large values of $\alpha$, we need a lot of oversampling. In the former case, this is due to the small amount of information captured by the measurements. This phenomenon is naturally the same for general Poisson observations, hidden in the factor $\left(1 + 1 / \alpha\right)^2$ in the sampling complexity \eqref{eq: sampling complexity poisson}. In the latter case, this is due to a loss of information because of the truncation in the Bernoulli model.

\section*{Acknowledgments}
PS is supported by NWO Talent program Veni ENW grant, file number VI.Veni.212.176.
The work for this paper was initiated during a discussion at the Lorentz Center workshop `Phase Retrieval in Mathematics and Applications’ (PRiMA). SD, FK, and PS would like to thank the Lorentz Center for its hospitality and support.

\bibliographystyle{unsrt}  
\bibliography{spectral_method_for_poisson_and_bernoulli_PR.bbl} 

\begin{thebibliography}{10}

\bibitem{glaeser1971limitations}
Robert~M Glaeser.
\newblock Limitations to significant information in biological electron
  microscopy as a result of radiation damage.
\newblock {\em Journal of ultrastructure research}, 36(3-4):466--482, 1971.

\bibitem{shechtman2015phase}
Yoav Shechtman, Yonina~C Eldar, Oren Cohen, Henry~Nicholas Chapman, Jianwei
  Miao, and Mordechai Segev.
\newblock Phase retrieval with application to optical imaging: a contemporary
  overview.
\newblock {\em IEEE signal processing magazine}, 32(3):87--109, 2015.

\bibitem{chang2018total}
Huibin Chang, Yifei Lou, Yuping Duan, and Stefano Marchesini.
\newblock Total variation--based phase retrieval for poisson noise removal.
\newblock {\em SIAM journal on imaging sciences}, 11(1):24--55, 2018.

\bibitem{chen2017solving}
Yuxin Chen and Emmanuel~J Cand{\`e}s.
\newblock Solving random quadratic systems of equations is nearly as easy as
  solving linear systems.
\newblock {\em Communications on pure and applied mathematics}, 70(5):822--883,
  2017.

\bibitem{diederichs2024wirtinger}
Benedikt Diederichs, Frank Filbir, and Patricia R{\"o}mer.
\newblock Wirtinger gradient descent methods for low-dose poisson phase
  retrieval.
\newblock {\em Inverse Problems}, 40(12):125030, 2024.

\bibitem{krahmer2024aonebit}
Felix Krahmer and Patricia R\"{o}mer.
\newblock A one-bit quantization approach for low-dose poisson phase retrieval.
\newblock In {\em 6th International Workshop on the Theory of Computational
  Sensing and its applications to Radar, Multimodal Sensing, and Imaging
  (CoSeRa)}. IEEE, 2024.

\bibitem{jannis2022event}
Daen Jannis, Christoph Hofer, Chuang Gao, Xiaobin Xie, Armand B{\'e}ch{\'e},
  Timothy~J Pennycook, and Jo~Verbeeck.
\newblock Event driven 4d stem acquisition with a timepix3 detector:
  Microsecond dwell time and faster scans for high precision and low dose
  applications.
\newblock {\em Ultramicroscopy}, 233:113423, 2022.

\bibitem{plan2016generalized}
Yaniv Plan and Roman Vershynin.
\newblock The generalized lasso with non-linear observations.
\newblock {\em IEEE Transactions on information theory}, 62(3):1528--1537,
  2016.

\bibitem{genzel2023unified}
Martin Genzel and Alexander Stollenwerk.
\newblock A unified approach to uniform signal recovery from nonlinear
  observations.
\newblock {\em Foundations of Computational Mathematics}, 23(3):899--972, 2023.

\bibitem{dirksen2025recovery}
Sjoerd Dirksen, Felix Krahmer, Patricia Römer, and Palina Salanevich.
\newblock Recovery guarantees for low-count poisson and bernoulli phase
  retrieval.
\newblock Submitted for conference publication.

\bibitem{plan2012robust}
Yaniv Plan and Roman Vershynin.
\newblock Robust 1-bit compressed sensing and sparse logistic regression: A
  convex programming approach.
\newblock {\em IEEE Transactions on Information Theory}, 59(1):482--494, 2012.

\bibitem{candes2015phase}
Emmanuel~J Cand{\`e}s, Xiaodong Li, and Mahdi Soltanolkotabi.
\newblock Phase retrieval via {W}irtinger flow: Theory and algorithms.
\newblock {\em IEEE Transactions on Information Theory}, 61(4):1985--2007,
  2015.

\bibitem{lu2020phase}
Yue~M Lu and Gen Li.
\newblock Phase transitions of spectral initialization for high-dimensional
  non-convex estimation.
\newblock {\em Information and Inference: A Journal of the IMA}, 9(3):507--541,
  2020.

\bibitem{luo2019optimal}
Wangyu Luo, Wael Alghamdi, and Yue~M Lu.
\newblock Optimal spectral initialization for signal recovery with applications
  to phase retrieval.
\newblock {\em IEEE Transactions on Signal Processing}, 67(9):2347--2356, 2019.

\bibitem{foucart2013introduction}
Simon Foucart and Holger Rauhut.
\newblock {\em A Mathematical Introduction to Compressive Sensing}.
\newblock Applied and Numerical Harmonic Analysis. Birkhäuser, 2013.

\bibitem{ledoux2013probability}
Michel Ledoux and Michel Talagrand.
\newblock {\em Probability in Banach Spaces: isoperimetry and processes}.
\newblock Springer Science \& Business Media, 2013.

\bibitem{vershynin2010introduction}
Roman Vershynin.
\newblock Introduction to the non-asymptotic analysis of random matrices.
\newblock {\em arXiv preprint arXiv:1011.3027}, 2010.

\bibitem{vershynin2018high}
Roman Vershynin.
\newblock {\em High-dimensional probability: An introduction with applications
  in data science}, volume~47.
\newblock Cambridge university press, 2018.

\bibitem{dirksen2015tail}
Sjoerd Dirksen.
\newblock Tail bounds via generic chaining.
\newblock {\em Electron. J. Probab.}, 20:1 -- 29, 2015.

\bibitem{dirksen2020one}
Sjoerd Dirksen, Hans~Christian Jung, and Holger Rauhut.
\newblock One-bit compressed sensing with partial gaussian circulant matrices.
\newblock {\em Information and Inference: A Journal of the IMA}, 9(3):601--626,
  2020.

\end{thebibliography}

\end{document}